\documentclass{amsart}
\usepackage{latexsym,amscd,array}
\usepackage{enumerate,paralist}
\usepackage{exscale}
\usepackage[centertags]{amsmath}
\usepackage{amssymb,stmaryrd}
\usepackage{amsthm}
\usepackage{dsfont}
\usepackage[all]{xy}
\usepackage{lscape}
\usepackage{pdflscape}
\usepackage{color}
\usepackage{graphicx}
\usepackage{hyperref}
\usepackage{mathdots}

\usepackage{lineno}

\usepackage{thmtools}

\newtheorem{thm}{Theorem}[section]

\newcommand{\myendsymbol}{\ensuremath{\diamondsuit}}

\declaretheorem[
  style=definition,
  title=Example,
  qed={$\myendsymbol$},
  refname={example,examples},
  Refname={Example,Examples},
  sharenumber=thm,
]{exa}

\declaretheorem[
  style=definition,
  title=Definition,
  qed={$\myendsymbol$},
  sharenumber=thm,
]{dfn}

\declaretheorem[
  style=definition,
  title=Hypothesis,
  qed={$\myendsymbol$},
  sharenumber=thm,
]{hyp}

\declaretheorem[
  style=definition,
  title=Notation,
  qed={$\myendsymbol$},
  sharenumber=thm,
]{ntn}

\declaretheorem[
  style=definition,
  title=Remark,
  qed={$\myendsymbol$},
  sharenumber=thm,
]{rmk}

\addtolength{\textwidth}{+4cm} \addtolength{\textheight}{+2cm}
\hoffset-2cm \voffset-1cm \setlength{\parskip}{5pt}
\setlength{\parskip}{8pt}
\newtheorem{prop}[thm]{Proposition}
\newtheorem{lem}[thm]{Lemma}
\newtheorem{cor}[thm]{Corollary}

\newcommand{\bolda}{{\mathbf{a}}}

\newcommand{\bolde}{{\mathbf{e}}}

\newcommand{\boldu}{{\mathbf{u}}}
\newcommand{\boldv}{{\mathbf{v}}}

\newcommand{\bsx}{{\boldsymbol x}}
\newcommand{\bsy}{{\boldsymbol y}}
\newcommand{\bsdel}{{\boldsymbol \del}}

\newcommand{\rank}{{\operatorname{rk}}}

\newlength{\myl}
\settowidth{\myl}{$\mu$}

\newcommand{\del}{\partial}

\newcommand{\into}{\hookrightarrow}

\renewcommand{\to}{\longrightarrow}

\newcommand{\minus}{\smallsetminus}

\newcommand{\calB}{\mathcal{B}}

\newcommand{\calF}{\mathcal{F}}
\newcommand{\calG}{\mathcal{G}}

\newcommand{\calT}{\mathcal{T}}
\newcommand{\calU}{\mathcal{U}}

\newcommand{\matroidM}{{\mathsf M}}

\newcommand{\frakt}{{\mathfrak{t}}}

\newcommand{\fraky}{{\mathfrak{y}}}

\newcommand{\CC}{\mathbb{C}}

\newcommand{\KK}{\mathbb{K}}

\newcommand{\NN}{\mathbb{N}}

\newcommand{\QQ}{\mathbb{Q}}
\newcommand{\RR}{\mathbb{R}}

\newcommand{\ZZ}{\mathbb{Z}}

\newcommand{\VExt}{{V_{\textrm{Ext}}}}

\DeclareMathOperator{\depth}{\textup{depth}}

\DeclareMathOperator{\Span}{\textup{span}\,}

\DeclareMathOperator{\Supp}{\textup{Supp}}

\newcommand{\Feynman}{{\mathrm{Feyn}}}
\newcommand{\mt}{{\mathrm{m.t.}}}

\numberwithin{equation}{subsection}

\setlength{\parindent}{0in}
\begin{document}
\title{On Feynman graphs, matroids, and GKZ-systems}


\author{Uli Walther}
\address{ Uli~Walther\\
  Purdue University\\
  Dept.\ of Mathematics\\
  150 N.\ University St.\\
  West Lafayette, IN 47907\\ USA}
\email{walther@purdue.edu}

\thanks{UW was supported by NSF grant DMS-2100288 and by Simons
  Foundation Collaboration Grant for Mathematicians \#580839.}

\begin{abstract}
We show in several important cases that the $A$-hypergeometric system
attached to a Feynman diagram in Lee--Pomeransky form, obtained by viewing the
momenta and the nonzero masses as indeterminates, has a normal
underlying semigroup. This continues a quest initiated by Klausen, and
studied by Helmer and Tellander. In the process we identify several
relevant matroids related to the situation and explore their
relationships. 
\end{abstract}

\maketitle

\setcounter{tocdepth}{3}
\tableofcontents

\section{Introduction}
Throughout, $G$ is a graph with edge set $E:=E_G$ and vertex set
$V:=V_G$.\footnote{We will typically use $E$ and reserve $E_G$ for
  cases where extra clarity is needed, for example when several graphs
  are around.}
Denote by $\calT^i_G$ its set of \emph{$i$-forests}, so
$F\in\calT^i_G$ whenever it is circuit-free and the graph on the set
of vertices of $G$ with the set of edges of $F$ has exactly $(i-1)$
more connected components than $G$ does. The nomenclature comes from
the fact that a $i$-forest in a connected graph has $i$ connected
components. If $G$ is connected, a $1$-forest is often called a
\emph{spanning tree}.

In the theory of Feynman integrals, edges correspond to particles, and
vertices to particle interactions.
Some of the vertices are labelled
as ``external''; the set of external vertices is denoted $\VExt$. An external vertex connects to an external
edge (that is not part of $G$) and these external edges represent the
externally measurable in- and output particles that interact according
to the graph.

Throughout we consider a mass function
\[
m\colon E\to \RR_{\geq 0},
\]
and denote by $m_e$ the mass of the particle corresponding to edge
$e$.  As a matter of general notation, we call \emph{massive} the
edges $e$ with $m_e\neq 0$; the other edges are \emph{massless}.

There is a momentum function $p$ on the external vertices of $G$, with
values in the 4-dimensional Minkowski space $\RR^{1,3}$ with
indefinite ``norm''
$p^2=|(p_0,p_1,p_3,p_3)|^2:=p_0^2-(p_1^2+p_2^2+p_3^2)$.  Momentum
conservation dictates that the momenta of the external particles must
sum to zero. We will assume (see Hypothesis \ref{hyp} below) that the
momenta do not satisfy any other constraints. In particular, when
measurements of experiment are taken, the momenta can be seen as
generic (subject to summing to zero); this setup fits most QFTs.

No generality on the Feynman diagram is lost if one assumes that the
underlying graph $G$ be connected, since disconnected graphs describe
separate particle interactions. Slightly more generally, one may assume that the
graph have no \emph{cut vertex}: the removal of any single vertex of $G$
should not increase the number of connected components. This property
is in the Feynman context referred to as (1VI),
short for ``one vertex irreducible''; see for example
\cite{Schultka}. Physically, the presence of a cut vertex means that
the particle interaction can be interpreted as a two-stage process
with independent parts.

A \emph{bridge} is an edge whose removal increases the number of
connected components. In the presence of bridges, as well as when the
graph has edges linking some vertex to itself, the corresponding
Feynman amplitude factors into amplitudes from simpler graphs. In
physics, a connected graph without any edges linking a vertex to
itself, and without bridges is called (1PI), short for ``one particle
irreducible''.  It implies in particular that no edge is part of every
1-forest.

\begin{dfn}
  We will say that the graph $G$ is \emph{strongly 1-irreducible},
  abbreviated as \emph{(s1I)} if it is both one particle irreducible
  and one vertex irreducible. Equivalently, such graphs are connected,
  and have no bridges, cut vertices or edges that link a vertex to itself.
\end{dfn}

Mathematically, the (s1I) property is ``the graphical (or,
equivalently, the co-graphical) matroid to $G$ is connected'', see
Subsection \ref{subsec-matroids} below.

The graph $G$ induces several interesting functions on
\[
\RR^E:=\bigoplus_{e\in E} \RR\cdot \bolde_e,
\]
that lie inside the polynomial ring $\CC[\bsx_E]$ on
variables $\bsx_E:=\{x_e\mid e\in E\}$ indexed by $E$ the \emph{dual
  graph polynomial}
\[
\calU:=\sum_{T\in\calT^1_G} (\bsx^E/\bsx^T),
\]
where here and elsewhere, $\bsx^S:=\prod_{e\in S} x_e$ for any
$S\subseteq E$, and more generally $\bsx^\bolda:=\prod_{e\in E}
x_i^{a_i}$ for $\bolda\in\ZZ^E$.

Given a set of external momenta, a second polynomial can be derived
from $G$, namely
\[
\calF_0:=-\sum_{F\in \calT^2_G} |p(F)|^2(\bsx^E/\bsx^F).
\]
Here, $p(F)$ is the sum of the momenta of the external vertices of $G$
that belong to one of the two components $F$,\footnote{Since the
  momenta sum is zero, both 2-forest components give the same
  coefficient.} compare the introduction of \cite{HT}.

Many QFT techniques take recourse to \emph{Wick rotation}, the
coordinate transformation that multiplies the momentum coordinate
$p_0$ by $\sqrt{-1}$. We shall write $\calF_0^W$ for the result of
Wick rotation on $\calF_0$. The effect is that the Minkowski norm
turns into the Euclidean norm, but it also moves the study of Feynman
amplitudes to the complex domain. For certain purposes, such as
considering families of Feynman type integrals in the spririt
discussed below, this is no actual disadvantage.

%

In contrast to the momenta, there is no
genericity assumption on the masses, and in particular they can be
zero.  One then defines
\[
\calF:=\calU\cdot(\sum_{e\in G}m_e^2x_e)+\calF_0^W.
\]
In the theory of Feynman integrals, in Lee--Pomeransky form, the function
\[
\calG_m:=\calU+\calF= \calU\cdot(1+\sum_{e\in G}m_e^2x_e)+\calF_0^W
\]
and its integrals are relevant, see \cite{Klausen-JHEP20,HT}.

\begin{rmk}\label{rmk-conditions}
  \begin{asparaenum}
    \item The Lee--Pomeransky formalism assumes that the underlying
      graph $G$ is of type (s1I). As noted, if a particle interaction
      is modeled by a graph that is not (s1I) then one can decompose
      the situation into subproblems whose graph is in fact (s1I).
    \item The Lee--Pomeransky form of the Feynman integral assumes
      Wick rotation. This means that one
      must allow for complex components in the momenta, which then
      raises the possibility of cancellation of coefficients in the
      sum $\calF$, resulting in the
      possible disappearance of certain monomials. For
      degree reasons no cancellation can occur between terms of $\calU$
      and terms of $\calF$. 
  \end{asparaenum}
\end{rmk}
In order to avoid the pathologies mentioned the previous
remark, we shall make the following assumptions.
\begin{hyp}[Feynman Hypotheses]\label{hyp}
  Throughout, 
  we shall assume that
  \begin{enumerate} 
  \item the underlying graph $G$ is (s1I) and has at least one edge
    (hence actually at least two);
  \item the values of the momenta are sufficiently generic, so that
    \begin{enumerate}
      \item in the sum $\calU\cdot (\sum_{e\in G}m_e^2x_e)+\calF_0^W$ no
        cancellation of terms occurs, and
      \item no proper subset of $\VExt$
        has zero momentum sum.
    \end{enumerate}    
  \item At least one 2-forest term appears in $\calG_m$.
  \end{enumerate}
\end{hyp}

\begin{rmk}
  \begin{asparaenum}
  \item Hypothesis \ref{hyp}.(1) can be postulated since
    Feynman amplitudes to graphs that fail this condition can be
    decomposed into amplitudes that come from graphs that satisfy the
    condition. 
  \item Hypothesis \ref{hyp}.(2) is sometimes assumed without the
    requisite advertisement. It is always in force when the external
    momenta are in the Euclidean region. Moreover, for the purpose of
    studying Feynman integrals as a family (for example, via
    GKZ-systems), momenta can be viewed as generic (subject to the
    external momentum sum being zero), and then Hypothesis
    \ref{hyp}.(2) holds as well.
  \item If Hypothesis \ref{hyp}.(3) is violated, the problem is
        trivialized to $\calG_m=\calU$ in which case it is known that
        the semigroup spanned by its support vectors is normal, \cite{HT}.
  \end{asparaenum}
\end{rmk}

\medskip

Treating the nonzero masses and momenta as indeterminates, one arrives
at a differentiable family of integrals. One method to study Feynman
integrals is by computing differential equations that govern this
family, and then solving them with a
power series Ansatz. After that, one may consider the specialization
of certain variables to special values, or one can investigate
geometric behavior (such as momodromy) of the family.

Let
\[
A_m:=\begin{pmatrix}
1&1&\cdots&1&1\\ \bolda_1&\bolda_2&\cdots&\bolda_{n-1}&\bolda_n
\end{pmatrix}
\]
be the matrix the columns of which are given by the lifted exponents
$\bolda_i$ of the monomials $\bsx^{\bolda_i}$ appearing in $\calG_m$;
here and elsewhere we call $(1,\bolda)\in\ZZ\times \ZZ^E$ the
\emph{lift} of $\bolda\in\ZZ^E$.  More generally, let $A$ be any
integer $(1+|E|)\times n$ matrix. We shall refer to
the group of integer linear combinations of the columns of $A$,
\[
\ZZ A:= \{\sum m_i \bolda_i\,\, |\,\, m_i\in\ZZ\}
\]
as the \emph{lattice of $A$}. In conjunction with
any choice of a complex parameter vector $\beta\in\CC\times
\CC^{E}$, such matrix $A$ induces a \emph{GKZ-system} (or also
called \emph{$A$-hypergeometric system}) $H_{A}(\beta)$ of linear
partial differential equations in $n$ new variables $y_1,\ldots,y_n$,
as we explain in the next section.  It is known for $A=A_m$ that a
suitable choice of the parameter $\beta$ causes the
$A_m$-hypergeometric system $H_{A_m}(\beta)$ to have among its
solutions the family of Feynman integrals to the graph $G$; see
\cite{Klausen-JHEP20,HT} for a down-to-earth discussion on this.

In the construction of the hypergeometric system $H_A(\beta)$ enters a
certain toric ideal $I_A$ in the polynomial ring $R_{A}=\CC[\bsdel]$
in the partial differentiation operators $\del_1:=\frac{\del}{\del
  y_1},\ldots,\del_n:=\frac{\del}{\del y_n}$; it is induced by the
monomial map from $\CC^*\times(\CC^*)^{E}$ to $\CC^n$ encoded in
$A$. Let $I_{A}$ be the ideal of $R_A$
describing the closure of the image of $\CC^*\times(\CC^*)^{E}$ in
$\CC^n$.
If the quotient
\[
S_A:=\CC[\NN A]\simeq R_A/I_A
\]
enjoys a certain algebraic property known as
\emph{Cohen--Macaulay}, then various desirable simplifications
regarding the solutions of $H_{A}(\beta)$ occur. As is discussed in 
\cite{Klausen-JHEP20}, of practical value in the theory of Feynman
integrals are: suitable initial ideals of $H_{A}(\beta)$ become
computable in elementary fashion without need to look at Gr\"obner
bases, and classical combinatorial recipes for manufacturing solutions
become much simpler, see \cite{SST} for background on hypergeometric
differential equations.

The Cohen--Macaulayness of $S_A$ is implied by, but by no means
equivalent to, the condition that the semigroup $\NN A\subseteq
\RR\times \RR^E$ be \emph{saturated}, which means that the
intersection of the non-negative rational cone $\RR_{\geq 0}A$ spanned
by the columns of $A$ over the origin with the lattice $\ZZ A$
contains no other lattice points than those in $\NN A$; see
\cite{SST,MMW} for more details on Cohen--Macaulayness in this
context.  Saturatedness is an arithmetic condition that involves study
of the interior points of the dilations of the polytope spanned by the
columns of $A$.

For notation, let the \emph{support} $\Supp(f)$ of a Laurent
polynomial $f=\sum c_\bolda \bsx^\bolda$ be the exponent vectors
\[
\Supp(f):=\{\bolda\mid c_\bolda\neq 0\}
\]
of the monomials appearing with nonzero coefficient in $f$.  Denoting
the convex hull of a set $S\subseteq \RR^E$ by $\overline S$, the
\emph{support polytope} of $f$ is $\overline{\Supp(f)}$. Let $P_m$ be
the support polytope of $\calG_m$.  Helmer and Tellander \cite{HT}
showed in the following two extreme cases that the semigroup of $A_m$
is saturated:
\begin{enumerate}
  \item in  the \emph{massive case} where each particle mass is
    positive,
  \item in the \emph{massless case} where each particle mass is
    zero (with the additional assumption that every vertex is external).
\end{enumerate}
In both cases, this
means that $S_{A_m}$ is Cohen--Macaulay.
The tools they use include
edge-unimodularity, flag matroid polytopes,
Cayley and Minkowski sums, which they used to study IDP properties of
polytopes.

In this note, we start with discussing the support vectors of
$\calG_m$ from the point of view of matroid theory. Of course, the
support vectors of $\calU$, interpreted as indicator functions,
describe the co-graphical matroid of $G$. We show here that the
support vectors of $\calF_0$ and those of the square-free terms in
$\calU\cdot(\sum_{e\in G}m_e^2x_e)$ both describe matroids as well. We
also show that, quite surprisingly, their union forms a matroid as well. So,
for all Feynman graphs the support vectors of the square-free terms of
$\calF$ form a matroid.

We use these matroidal results, and some ideas of \cite{HT} to show
that the semigroup generated by $A_m$ is saturated for (s1I) graphs
$G$ in the two cases
\begin{enumerate}
  \item if every 2-forest of $G$ induces a nonzero term in $\calG_m$
    (Theorem \ref{thm-main});
  \item if $m_e=0$ for all $e$ (Theorem \ref{thm-massless});
\end{enumerate}
which  generalize the two corresponding cases in
\cite{HT}.
In consequence, in these cases $A_m$ defines a hypergeometric system that
enjoys the Cohen--Macaulay property.

\medskip

In the next section we set up the necessary notation, and carefully
describe the needed details about hypergeometric systems, as well as
graphs, polytopes and matroids. In Section 3, we discuss the
advertised matroids, and in Section 4 we state and prove the semigroup
results. Under Condition (1) above, this follows from an inspection of
the way that the cone over $A_m$ behaves under specialization of a
mass to zero. In the massless case we follow the route of \cite{HT} in
the corresponding context. We also proide some partial results towards
the general case. In the last section we discuss some examples of the failure
of Hypothesis \ref{hyp}. For the convenience of the reader, we provide
a list of symbols at the end.

\section*{Acknowledgements}
I am much indebted to Ren\'e-Pascal Klausen for an astute observation
leading to Hypothesis \ref{hyp}(2), for enlightening discussions on
Feynman amplitudes and QFTs, and for criticism on earlier versions of this
article. My sincere thanks go to Martin Helmer and Felix Tellander
writing their article and for sharing their insights.  I am also
grateful to Diane MacLagan, Christian Haase and Karen Yeats for
helpful explanations on polytopal yoga.

\section{Notation and basic concepts}

If $e\in E$
then we denote the unit vector of $\RR^E$ pointing in $e$-direction by
$\bolde_e$, and if $S\subseteq E$ is a collection of edges then we
write $\boldv_S$ for the \emph{indicator vector} of $S$ defined by 
\[
\boldv_S=\sum_{e\in S}\bolde_e.
\]

\subsection{Hypergeometric systems}

We give here a minimal introduction to $A$-hypergeometric systems
invented by Gel'fand, Graev, Kapranov and Zelevinsky in the mid-1980s.
For details and literature on them and on parametric integrals that
occur as their solutions we refer to the book \cite[Sec.~5.4]{SST},
and to the survey \cite{RSSW}.

Take an integer matrix $A\in\ZZ^{(1+d)\times n}$, and a set of
variables $\bsy=y_1,\ldots,y_n$. Denote the partial derivative
operators $\del/\del y_j$ by $\del_j$ and consider the Weyl algebra
$D_A$ in variables $y_1,\ldots,y_n$ given as the non-commutative ring
$\CC[\bsdel]\langle \bsy\rangle$. The elements of $D_A$ can be
interpreted as linear differential operators in $\bsy$ with polynomial
coefficients.

The matrix $A$ induces a monomial action
\begin{eqnarray*}
(\CC^*)^{1+d}\times \CC^n&\to&\CC^n,\\
  (\frakt,\fraky)&\mapsto&(\frakt^{\bolda_1}\fraky_1,\ldots,\frakt^{\bolda_n}\fraky_n)
\end{eqnarray*}
of the $(1+d)$-torus on the affine space with coordinates
$\del_1,\ldots,\del_n$.  The usual closure of the orbit of the point
$(1,\ldots,1)\in\CC^n$ is also Zariski closed, and defined by the
toric ideal $I_A$ generated by the binomials
$\Box_{\boldu,\boldv}:=\bsdel^\boldu-\bsdel^\boldv$, running over all
$\boldu,\boldv\in\NN^n$ with $A\cdot\boldu=A\cdot\boldv$. One may view
$I_A$ as a subset of $D_A$ via the embedding of rings
$\CC[\bsdel]\into D_A$.

The matrix $A$ also induces $(1+d)$ \emph{Euler operators}
\[
\qquad\qquad E_i:=\sum_{j=1}^n a_{i,j}y_j\del_j\in D_A\qquad\text{ for
  $0\le i\le d$}.
\]
Given a choice of $\beta\in\CC^{1+d}$, the \emph{hypergeometric ideal} to $A$
and $\beta$ is
\[
H_A(\beta):=D_A\cdot(I_A,\{E_i-\beta_i\}_{i=0}^{d}).
\]
Any left ideal $H=\sum D_A Q_i$ of $D_A$ generated by the operators
$\{Q_i\}_i\subseteq D_A$ can be interpreted as a system of linear
partial differential equations on a solution function $\phi(\bsy)$, by
asking that $Q\bullet (\phi(\bsy))=0$ for all $Q\in H$ (or,
equivalently, that $Q_i\bullet(\phi(\bsy))=0$ for all $i$). As is
explained in \cite{HT}, if one reads the coefficients of $\calG_m$ as
parameters then the Feynman integrals
corresponding to $A_m$ appear as solutions of
$H_{A_m}(\beta)$ for the right choice of $\beta$. For the study of
Feynman integrals, the entire family is useful; for some purposes 
even  $\beta$ is viewed as a variable.

\begin{rmk}
  A frequent hypothesis in the theory of $A$-hypergeometric systems is
  that the group $\ZZ A$ generated by the columns of $A$ agrees with
  the ambient lattice $\ZZ^{1+d}$ inside $\RR^{1+d}$. The hypothesis
  is not crucial to the majority of known results, but it usually allows a
  much simpler formulation. However, the question whether a semigroup
  ring is normal is only decided by the saturatedness of the semigroup
  in its own lattice, the group it generates.
\end{rmk}

\subsection{Polytopes}

A polytope $P$ in $\RR^{1+d}$ is a \emph{lattice polytope} if its
vertices belong to the lattice $\ZZ\times\ZZ^d$ inside $\RR^{1+d}$. 

Given two polytopes $P, P'$ in $\RR^E$,
their \emph{Minkowski sum}
$P+P'$ is the set of points $\{w=v+v'\in \RR^E\mid v\in P, v'\in
P'\}$. The edges of a Minkowski sum are parallel to egdes
of the input polytopes.
The vertices of a Minkowski sum
are always sums of vertices of the input polytopes
(although some such sums might be interior points of the sum
polytope). In contrast, the set of the lattice points in a Minkowski
sum is often not equal to the sum of the sets of lattice points
in the two input polytopes.

Let us set 
\[
E_m:=\{e\in E\mid m(e)\neq 0\}\qquad \text{and} \qquad E_0:=E\minus E_m.
\]
Writing $m_e$ for $m(e)$ to ease notation, set
\[
\Sigma_m:=\sum_{e\in E_m}m_e^2x_e\quad \text{ and }\quad
\Delta_m:=\overline{\Supp(\Sigma_m)};
\]
the latter is the simplex in $\RR^E$ spanned by the unit vectors
$\{\bolde_e\}_{e\in E_m}$.

We also set
\[
\tilde\Sigma_m:=1+\Sigma_m,\qquad\text{ and }\qquad
\tilde\Delta_m:=\overline{\Supp(\tilde\Sigma_m)}.
\]

If we already have a specific mass function $m$ in mind, we write
\begin{eqnarray}\label{eqn-withE1}
  \Sigma_E:=\Sigma_m+\sum_{e\in E_0}x_e,&\qquad&\Delta_E:=\Supp(\Sigma_E),\\
  \tilde\Sigma_E:=1+\Sigma_E,&&\tilde\Delta_E:=\Supp(\tilde\Sigma_E).\label{eqn-withE2}
\end{eqnarray}

According to Hypothesis \ref{hyp}, the support polytope of $\calG_m$
is the same as the polytope spanned by the union
$\Supp(\calU)\cup\Supp(\calU\cdot\Sigma_m)\cup
\{\Supp(|p(F)|^2\cdot \bsx^F)|F\in\calT^2_G\})$ since in the sum
$\calG_m=\calU\cdot\tilde\Sigma_m+\calF_0$ no terms are lost due to
coefficient cancellations,

\subsection{Graphs and their matroids}\label{subsec-matroids}

We generally use the graph and matroid language as it prevails in
mathematics. So, for us a \emph{loop} is an edge that is incident to
only one vertex; a
\emph{circuit} is a set of edges whose union in a realization of the
graph is homeomorphic to a polygon (in physics this is sometimes called
a \emph{loop}).



In each term of $\calU$ and of $\calF_0$, each variable appears (by
definition) with degree at most one. On the other hand,
$\calU\cdot\Sigma_m$ can have some terms with some variable of degree
two (and the other variabless of degree one or zero). Such square
terms can occur only for massive variables (and if a variable is in
fact massive then it will occur in some term with degree two since the
corresponding edge cannot not belong to every 1-forest in the (s1I)
graph $G$).

A \emph{matroid} $\matroidM$ is determined by a distinguished
collection $\calB_\matroidM\subseteq 2^E$ of \emph{bases}, all of
equal cardinality, taken from a fixed ground set $E$.
From this angle, the defining property of a  matroid is a version of the Exchange Axiom of Steiner from
linear algebra: if $B,B'$ are two bases of a matroid, and $e\in B$,
then there is $e'\in B'$ such that $(B\minus \{e\})\cup \{e'\}$ is
again a basis. In
fact, there is an equivalent ``strong'' version where in the same
notation the set 
$(B'\minus\{e'\})\cup\{e\}$ can also be arranged to be a basis.
The
notion of a matroid generalizes the idea of linear independence of
sets of vectors, and much of the nomenclature is borrowed from linear
algebra. We refer to \cite{Oxley} for background and all facts that we
use about matroids.

For example, matroids have a \emph{rank function}
\[
\rank_\matroidM\colon 2^E\to\NN,
\]
and the bases are precisely the minimal sets (with respect to
inclusion) of maximum possible rank in $\matroidM$. The rank of a
matroid is (by definition) the size of any of its bases (which is
indeed a well-defined integer).  A
\emph{loop} of a matroid $\matroidM$ is an element $e$ for which
$\rank_\matroidM(\{e\})=0$.  To each basis $B$ one has an indicator
vector $\boldv_B$ in $\{0,1\}^E$ with $\boldv_B(e)=1$ if and only if
$e\in B$; so the entry sum of any $\boldv_B$ is the rank
of $\matroidM$. The convex hull of the lattice vectors $\{\boldv_B\mid
B\in\calB_\matroidM\}$ is the \emph{matroid polytope} of $\matroidM$.

Every $\boldv_B$ is a vertex of the matroid polytope, since it is even
a vertex of the polytope spanned by all integer vectors that have only
0/1 entries and entry sum $\rank(\matroidM)$. Indeed, among such
integer vectors, $\boldv_B$ realizes the unique maximum of the linear
function that takes dot product against $\boldv_B$. 

A matroid is
\emph{Boolean} if $E$ itself is a basis (and then the only one). More
generally, the Strong Exchenge Axiom implies that
 the
edges of the matroid polytope are  precisely those that link
(indicator vectors of) bases
that agree in all but two positions. In particular,
egdes of the matroid polytope are parallel to the vectors
$\bolde_e-\bolde_{e'}$, \cite{GGMS}.

A \emph{circuit} of a matroid is a set that is not contained in any
basis, and minimal (with respect to inclusion) in this regard. Loops
are circuits.  
An \emph{independent} set is one that contains no circuit;
independent sets are exactly those subsets of
$E$ on which the
rank function agrees with the cardinality function, and they can also
described as the sets that are subsets of bases. Bases are maximal
independent sets, and proper subsets of circuits are independent.

\medskip

If $G$ is a graph, the collection $\calT^1_G$ of 1-forests of
$G$ forms the set of bases for a matroid $\matroidM^1_G$ on the
underlying set $E$ of edges. Circuits of the graph are then circuits
of $\matroidM^1_G$, and (graph-theoretic) loops correspond to
(matroid-theoretic) loops. Matroids that arise
this way are called \emph{graphic}.

For a set of edges $S$ from $G$ (which we read as a subgraph of $G$ on
the same vertex set $V_G$) we call their \emph{span} the collection of all
edges of $G$ that connect vertices of $G$ that belong to the same
connected component in the subgraph $S$. In other words, the vertex
partitions of $V_G$ by sets of
connected components of $S$ and $\Span(S)$ are the same, and $\Span(S)$
is the largest subgraph of $G$ in this regard. Then
$\rank(S)=\rank(\Span(S))$ is the difference of the number of
components of $S$ (as graph on the vertex set of $G$) and $|V_G|$.
The rank function can also be interpreted as 
the size of the largest circuit-free subset, and span in a general
matroid is the largest superset with the same rank as the given set.

The set of complements $\{E\minus T\mid T\in\calT^1_G\}$ forms the set
of bases for another
matroid $\matroidM^{1,\perp}_G$ on $E$ that turns out to be dual
to $\matroidM_G$ in a suitable sense.
For this \emph{cographic} matroid $\matroidM_G^\perp$, a loop is an edge
that is part of every 1-forest of $G$. Its removal thus disconnects the
graph and such edge cannot occur in a (s1I) Feynman diagram. So, for an
(s1I) graph, neither the graphic nor the cographic matroid has
loops.

Similarly, the set of 2-forests $\calT^2_G$, as well as the set of
their complements, form matroids that we denote $\matroidM^2_G$ and
$\matroidM^{2,\perp}_G$ respectively. 

Any matroid can be written as a matroid sum of simple matroids; a
matroid is \emph{simple} if
 it is impossible to write the set of bases $\calB_\matroidM$ as the
 set of unions of the bases of two submatroids on disjoint subsets
 $E_1,E_2$ of $E$. A graph is (s1I) if and only if its graphic 
 and cographic matroid are
 simple.

Let $\bsx=\{x_e\mid e\in E\}$ be a set of indeterminates that are in
correspondence with the elements of the ground set of
$\matroidM$. There is
an induced \emph{matroid basis polynomial}
\[
\Phi_\matroidM=\sum_{B\in\calB_\matroidM}\bsx^{\boldv_B}\in\CC[\bsx]
\]
with very interesting combinatorial properties.\footnote{A more general
   class of polynomials arises from \emph{realizations} of
  matroids, see for example \cite{BEK,Patterson,DSW,DPSW}.}
The polynomial $\calU$ is the matroid basis polynomial
$\Phi_{\matroidM^{1,\perp}_G}$ of $\matroidM_G^{1,\perp}$, and the induced
polytope
\[
P_G^{1,\perp}:=\overline{\Supp(\calU)}
\]
is the matroid polytope to $\matroidM_G^{1,\perp}$.  On the other hand,
$\Delta_E$ is the matroid polytope to the cographic matroid on $E$
corresponding to a polygon with $|E|$ edges (or to the graphic matroid
to the graph on $|E|$ edges with only two vertices and no loops; these
are called \emph{banana} or \emph{sunset} graphs).

\bigskip

If $\matroidM$ is any matroid on the set $E$, then the semigroup
\[\{\boldv_b\mid B\in \calB_\matroidM\}\]
is saturated in its own lattice, by \cite[Thms.~1, 2]{White}. 

For any pointed (\emph{i.e.}, no invertibles except for the neutral
element) sub-semigroup $S$ of a free Abelian group of finite
rank, the semigroup ring $\CC[S]$ is normal if and only if $S$ is
saturated in the group generated by $S$ according to \cite{Hochster},
and this happens if and only if the semigroup of its lifts is
saturated in the ambient lattice. All such semigroup rings are toric,
and therefore their normality implies Cohen--Macaulayness.

  By a \emph{unimodular matrix} we mean here a matrix with integer
entries whose maximal minors are all in the set $\{-1,0,1\}$. 

In \cite{HT} it is shown that the
semigroup generated by the lifts of the support vectors of $\calG_m$ is
normal provided that either a) all masses are nonzero, or b) all
masses are zero and every vertex is an external vertex.

\section{Matroids in Feynman Theory}\label{sec-matroids}

Recall that we assume that $G$ satisfies the conditions in Hypothesis
\ref{hyp}, and that $\calT^1_G$ and $\calT^2_G$ denote the collections
of spanning trees and 2-forests of $G$ respectively.

By Hypothesis \ref{hyp}, the monomials appearing in $\calG_m$ are
exactly those appearing $\calU$, plus those appearing in either
$\calU\cdot \Sigma_m$ or $\calF_0$.  The squarefree ones in these last
two polynomials are indexed, respectively,
by a massive edge in a spanning tree for $G$, or a 2-forest with
non-vanishing moment coefficient.  In this section we investigate
the matroidal properties of these two sets. They form the tools for
the main results in the next section.

In order to simplify
the discussion we introduce some language.
\begin{ntn}
  If $G',G''$ are subgraphs of $G$ then if $e\in E_G$ is an edge we
  say it \emph{links $G'$ to $G''$} if it involves one vertex from
  $G'$ and one vertex from $G''$. We further say that $e$ \emph{is
    supported on $G'$} if both vertices of $e$ are vertices of
  $G'$. This does not require that $e$ be an edge of $G'$.
\end{ntn}

\subsection{Momentous 2-forests}

\begin{dfn}
  A 2-forest $F\in\calT^2_G$ is \emph{momentum-free} if the momentum
  coefficient $|p(F)|^2$ of $\bsx^{E\minus F}$ in $\calF_0$ is zero. We denote the set of
  momentum-free 2-forests of $G$ by $\calT^2_{G,0}$. 

  We call the elements of the complementary set
  \[
  \calT^2_{G,\neq}:=\calT^2_G\minus \calT^2_{G,0}
  \]
  the \emph{momentous 2-forests}.
  
Note that, by Hypothesis \ref{hyp},  a 2-forest $F=F_1\sqcup F_2$ with connected
components $F_1,F_2$ is in $\calT^2_{G,0}$ precisely when either
$\VExt\subseteq F_1$ or $\VExt\subseteq F_2$.

\end{dfn}

As a very special example of a momentum-free 2-forest, let $v$ be an
interior vertex and let $F$ be a spanning tree for the graph obtained
by deleting $v$ and all incident edges from $G$. Then $F\cup\{v\}$ is
a 2-forest for $G$ that lies in $\calT^2_{G,0}$. More extremely, if
$G$ were permitted to have only one external vertex, no momentous
2-forest would exist at all, and $\calF_0$ would be zero altogether.

\begin{lem}\label{lem-moment}
  The set $\calT^2_{G,\neq}$
  is  the set
  of bases of a matroid on the edge set $E$ of $G$.
\end{lem}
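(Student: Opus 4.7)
The plan is to verify the basis exchange axiom for $\calT^2_{G,\neq}$ directly. Every 2-forest of $G$ has $|V_G|-2$ edges, so all candidate bases share the same cardinality, and Hypothesis~\ref{hyp}.(3) guarantees that $\calT^2_{G,\neq}\neq\emptyset$. Fix momentous 2-forests $F,F'$ and an edge $e\in F\setminus F'$; the task is to exhibit $e'\in F'\setminus F$ such that $(F\setminus\{e\})\cup\{e'\}$ is again a momentous 2-forest.

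Write the components of $F$ as $F_1,F_2$ on vertex sets $V_1,V_2$, with $e\in F_1$. Deleting $e$ from the tree $F_1$ splits it into subtrees $F_1^a,F_1^b$ on nonempty vertex sets $A,B$, producing a 3-forest $F\setminus\{e\}$ with vertex partition $V_G=A\sqcup B\sqcup V_2$. For any $e'$ that completes $F\setminus\{e\}$ to a 2-forest, $e'$ must connect two of the three parts; I would group such candidates into three types (I: $A$-$B$, II: $A$-$V_2$, III: $B$-$V_2$) and compute the resulting external-vertex distribution in each case. A short calculation shows Type~I is always momentous (it reproduces the bipartition of $F$), Type~II is momentous precisely when $\VExt\cap B\neq\emptyset$, and Type~III precisely when $\VExt\cap A\neq\emptyset$. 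Note that since $F\setminus\{e\}$ contains no edge crossing this 3-partition, any crossing edge of $F'$ automatically lies in $F'\setminus F$.

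The main obstacle is extracting a winning-type crossing edge from $F'$ itself; this is where the momentousness of $F'$ enters. If both $\VExt\cap A$ and $\VExt\cap B$ are nonempty, every type is winning, and one need only observe that the two components of $F'$ cannot simultaneously fit inside individual parts of the nontrivial partition $\{A,B,V_2\}$, forcing a crossing edge of $F'$ to exist. If $\VExt\cap A=\emptyset$ (the case $\VExt\cap B=\emptyset$ is symmetric), I must produce a Type~I or Type~II edge, i.e., an $F'$-edge from $A$ to $B\cup V_2$. If no such edge existed, then $A$ would be a union of connected components of $F'$; since $F'$ has exactly two components and both $A$ and its complement $B\cup V_2$ are nonempty, one whole component of $F'$ would have to equal $A$. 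But then that component would contain no external vertex, contradicting the momentousness of $F'$. This external-vertex bookkeeping, which combines the Feynman-type constraints on both $F$ and $F'$, is the only non-routine step; uniform cardinality and nonemptiness are immediate.
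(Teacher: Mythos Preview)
Your proof is correct and follows essentially the same route as the paper: both delete $e$ from $F$ to obtain a 3-partition of $V_G$, classify the possible completing edges of $F'$ by which two blocks they join, and invoke the momentousness of $F'$ to rule out the residual case in which only a non-momentous completion would be available. The labeling differs only cosmetically (the paper calls the three blocks $F_1,F_2,F_3$ and arranges that $F_1,F_2$ carry external vertices, whereas you track which component of $F$ was split into $A,B$); one tiny quibble is that Hypothesis~\ref{hyp}.(3) alone does not force $\calT^2_{G,\neq}\neq\emptyset$, since the guaranteed 2-forest term could come from a massive truncation---non-emptiness follows instead from $|\VExt|\ge 2$ together with Hypothesis~\ref{hyp}.(2b).
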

\begin{proof}
  If $|\VExt|=1$, there are no momentous 2-forest, so there is nothing
  to show. So we assume that at least two external vertices exist.
  
  If $\calT^2_{G,\neq}$ is non-empty, we need to show that the set of
  momentous 2-forests satisfies the matroid basis exchange axiom.  So,
  choose $F\in\calT^2_{G,\neq}$, and suppose $F'$ is an arbitrary
  second 2-forest. Choose $e\in F$; then $F\minus \{e\}$ is a 3-forest
  $F_1\sqcup F_2\sqcup F_3$ of $G$, where the $F_i$ are the connected
  components of $F\minus \{e\}$.

  Since the full collection $\calT^2_G$ of 2-forests forms the set of
  bases of a matroid, some edges of $F'$, when added to
  $F\minus\{e\}$, produce again a 2-forest. These are precisely those
  edges of $F'$ that link $F_i$ to  $F_j$,
  for $i\neq j$ in $\{1,2,3\}$.

  Since $F$ is in $\calT^2_{G,\neq}$, the external vertices do not lie
  entirely inside one of the components of $F$, and even less do they
  lie entirely inside a connected component $F_i$ of $F\minus
  \{e\}$. Thus, after possibly relabeling, both $F_1$ and $F_2$, and
  possibly also $F_3$, will contain an external vertex. If $F_3$ does
  in fact contain an external vertex, then adding any edge $f\in F'$
  to $F_1\sqcup F_2\sqcup F_3$ will leave the external vertices split
  between at least two different connected components. Combined with
  the previous paragraph and Hypothesis \ref{hyp}.(3) we can dispose
  of the case when $F_3$ also contains an external vertex.

  Now suppose $F_3$ does not contain an external vertex, so $\VExt$ is
  in the disjoint union $ F_1\sqcup F_2$. If $F'$ contains an edge $f$
  that links $F_3$ either to $F_1$ or to
  $F_2$, we are done, since then $(F\minus \{e\})\cup \{f\}$ is a
  2-forest in $\calT^2_{G,\neq}$. So consider the possibility that
  $F'$ has no such edge; then no edge of $F'$
  links  $F_3$ to
  $F_1\cup F_2$.  This disconnection shows that the 2-forest
  $F'$ has one connected component that uses the vertices of $F_3$,
  and one component that uses the vertices of $F_1\cup F_2$. But then
  $F'$ has $\VExt$ inside one of its components and thus can't be in
  $\calT^2_{G\neq}$. The lemma follows.
\end{proof}

\begin{dfn}
  We denote the matroid of the previous lemma by
  $\matroidM^2_{G,\neq}$.
\end{dfn}

Recall that a matroid $\matroidM'$ is a quotient of the matroid
$\matroidM$ if (they are matroids on the same ground set and) any
circuit in $\matroidM$ is  a union
of circuits in $\matroidM'$. 

\begin{lem}
  $\matroidM^2_{G,\neq}$ is a quotient of $\matroidM^1_G$.
\end{lem}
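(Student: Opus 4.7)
The plan is to apply the rank-function characterization of matroid quotients: two matroids $\matroidM,\matroidM'$ on the same ground set satisfy that $\matroidM'$ is a quotient of $\matroidM$ if and only if the gap $\delta(S):=\rank_\matroidM(S)-\rank_{\matroidM'}(S)$ is monotone nondecreasing under inclusion. This is equivalent to the circuit-union definition recalled in the paper. I set $\matroidM=\matroidM^1_G$ and $\matroidM'=\matroidM^2_{G,\neq}$.

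First I compute the two ranks. The graphic rank is standard: $\rank_{\matroidM^1_G}(S)=|V_G|-c(S)$, where $c(S)$ counts the connected components of the subgraph $(V_G,S)$. For $\rank_{\matroidM^2_{G,\neq}}(S)$, I would show that a forest $S'\subseteq E$ extends to a momentous 2-forest exactly when $\VExt$ meets at least two connected components of $(V_G,S')$. The ``only if'' direction is immediate; the ``if'' direction uses the contraction idea from the proof of Lemma \ref{lem-moment}: contract each component of $S'$ in $G$, label a contracted vertex external whenever it contains an external vertex of $G$, take a spanning tree of the contracted graph, and delete an edge on a path joining two externally labelled vertices. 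Since any two maximal forests inside $S$ induce the same partition of $V_G$ into connected components, this yields
\[
\rank_{\matroidM^2_{G,\neq}}(S)=\begin{cases}\rank_{\matroidM^1_G}(S) & \text{if $\VExt$ meets $\ge 2$ components of $(V_G,S)$,}\\[2pt]\rank_{\matroidM^1_G}(S)-1 & \text{otherwise,}\end{cases}
\]
the bottom case being realised by dropping one edge of a maximal forest of $S$ lying on a tree-path between two external vertices inside the unique external component. Hypothesis \ref{hyp}(3) forces $|\VExt|\ge 2$, which guarantees the existence of such a path.

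Hence $\delta$ takes only values in $\{0,1\}$, with $\delta(S)=1$ precisely when all of $\VExt$ sits in a single component of $(V_G,S)$. Monotonicity is then immediate: if $B\subseteq A$ then the components of $(V_G,A)$ are coarsenings of those of $(V_G,B)$, so if $\VExt$ is confined to a single component of $B$ it remains so in $A$. Thus $\delta(B)=1\Rightarrow\delta(A)=1$, while $\delta(B)=0\le\delta(A)$ is automatic. This establishes the quotient property.

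The main obstacle is the rank computation in the middle paragraph, and specifically the assertion that the rank drops by at most one when $\VExt$ is confined to a single component of $(V_G,S)$. Verifying this reduces to the tree-path argument inside that unique external component, and is exactly where the hypothesis $|\VExt|\ge 2$, i.e.\ Hypothesis \ref{hyp}(3), is essential.
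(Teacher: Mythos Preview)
Your argument is correct and takes a genuinely different route from the paper. The paper verifies the circuit-union definition directly: given a circuit $C$ of $\matroidM^1_G$, it runs a three-case analysis according to whether $C$ contains none, some, or all of $\VExt$, and in each case exhibits $C$ explicitly as a union of circuits of $\matroidM^2_{G,\neq}$. You instead compute both rank functions and show that their difference is the $\{0,1\}$-valued indicator of the event ``$\VExt$ lies in a single connected component of $(V_G,S)$'', whose monotonicity is immediate from the fact that adding edges only coarsens the component partition. Your approach is cleaner and yields more structural information (it shows in effect that $\matroidM^2_{G,\neq}$ is an elementary quotient of $\matroidM^1_G$, of corank one); the paper's approach has the virtue of staying entirely within the circuit definition and so does not need to invoke the equivalence of the two characterizations of a matroid quotient.

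One small correction: Hypothesis~\ref{hyp}(3) does not on its own force $|\VExt|\ge 2$, since a 2-forest term in $\calG_m$ may arise from a massive truncation rather than from $\calF_0$. When $|\VExt|\le 1$ there are no momentous 2-forests and the paper treats $\matroidM^2_{G,\neq}$ as the rank-zero matroid; in that degenerate case $\delta(S)=\rank_{\matroidM^1_G}(S)$ is still monotone, so your conclusion survives, but your displayed rank formula does not. You should dispose of this case separately rather than claim it is excluded by Hypothesis~\ref{hyp}(3). (Also, the ``contraction idea'' you credit to the proof of Lemma~\ref{lem-moment} is not actually present there; that proof is a direct exchange-axiom verification. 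Your extend-then-delete construction for the ``if'' direction is your own, and it is fine.)
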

\begin{proof}

The graphic matroid $\matroidM^1_G$ of $G$ has as circuits the circuits of $G$.
Suppose $C$ is one such circuit; it cannot be independent in
$\matroidM^2_{G,\neq}$ since it cannot be contained in any
2-forest. We will show that it is the union of circuits in
$\matroidM^2_{G,\neq}$.

  If $\matroidM^2_{G,\neq}$ is the trivial matroid, each singleton is
  a circuit, and the lamme follows. So, we can assume that
  $\matroidM^2_{G,\neq}$ is not trivial.

For the moment assume that $C$ contains at least one, but not every, external
vertex. Let $e$ be any edge of $C$.  As $C\minus \{e\}$ is
independent in $\matroidM^1_G$, we can embed it into a spanning tree $T$
for $G$.  Then let $v$ be an external vertex not in $C$. Since the set
$C\minus\{e\}$ is connected and $T$ is a tree, there is a unique
shortest path in $T$ that connects $v$ with $C\minus \{e\}$. Remove
one of the edges $f$ in this shortest path to obtain from $T$ a
2-forest $F$ in which $v$ and $C\minus \{e\}$ lie in different
components. Note that by construction $f$ is not in $C$. It
follows that $F$ is a basis in in $\matroidM^2_{G,\neq}$ and so
$C\minus\{e\}\subseteq F$ is independent in
$\matroidM^2_{G,\neq}$. Since this is so for any $e\in C$, $C$ is a
circuit in $\matroidM^2_{G,\neq}$.

Now suppose $C$ contains no external vertex. Again, remove an
arbitrary edge
$e\in C$ and embed the resulting $C\minus \{e\}$ into a spanning tree
$T$ for $G$. Choose any two external vertices $v,v'$. Within $T$ there
is a unique minimal path from $v$ to $v'$. Since neither vertex is in
$C$, there is at least one edge $f$ in this minimal path
that does not belong to $C$. Remove $f$ from $T$ to arrive
at a 2-forest containing $C\minus\{e\}$. It is momentous by Hypothesis
\ref{hyp}.(3) since the external vertices are not all in one
component. It follows that
removing any edge from $C$ makes it independent in
$\matroidM^2_{G,\neq}$ and thus $C$ is a circuit in
$\matroidM^2_{G,\neq}$.

Finally, suppose $C$ contains all $\ell\geq 2$ external vertices.
Denote the vertices of $C$ by $v_1,\ldots,v_c$, written in such a way
that $(v_j,v_{j+1})$ are the edges of the circuit (with the
understanding that $v_{c+1}=v_1$). Let $1\le i_1<\ldots<i_\ell\le c$
be the labels that correspond to the $\ell=|\VExt|$ external
vertices. Let $C_k$ be the result of removing from $C$ the edges
$(v_{i_k},v_{i_k+1}),\ldots,(v_{i_{k+1}-1},v_{i_{k+1}})$ that lie in
the chosen orientation of $C$ between the external vertices $v_{i_k}$
and $v_{i_{k+1}}$. (Again, we agree that
$v_{i_{\ell+1}}=v_{i_1}$). Then in $\matroidM^1_G$, these sets $C_k$
are independent, but in $\matroidM^2_{G,\neq}$ they are still
dependent since they contain all external vertices. We claim that
$C_k$ is in fact a circuit in $\matroidM^2_{G,\neq}$. Indeed, for any
edge $e\in C_k$, the graph $C_k\minus\{e\}$ has two connected components and
$\VExt$ is not contained in either one: one component contains
$v_{i_k}$ and the other contains $v_{i_{k+1}}$. Thus, $C_k\minus\{e\}$
can be completed to a 2-forest such that neither of its components
contains $\VExt$ and is hence independent in
$\matroidM^2_{G,\neq}$. To finish the proof, observe that $C$ is
covered by the various $C_k$.

\end{proof}

\subsection{Massive truncations}

\begin{dfn}
  A 2-forest $F$ that can can be written as $T\minus\{e\}$ for a
    spanning tree $T$ and a massive edge $e$ is called a \emph{massive
      truncation (of $T$ by $e$)}. We denote by $\calT^2_{G,\mt}$ the
    collection of massive truncations.  
\end{dfn}
The massively truncated 2-forests are those that label nonzero
squarefree terms in $\calF^W_0$.
\begin{lem}\label{lem-mass-trunc}
  The set of massively truncated 2-forests forms the
  bases of a matroid on the edge set $E$ of $G$.
\end{lem}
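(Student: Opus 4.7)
My plan is to verify the basis exchange axiom directly, in the spirit of Lemma~\ref{lem-moment}. Every massive truncation has cardinality $|V_G|-2$, so only the exchange needs argument. Fix $F,F'\in\calT^2_{G,\mt}$ and $e\in F$; if $e\in F'$, take $e'=e$, so we assume $e\in F\setminus F'$. Let $V_1\sqcup V_2$ be the vertex partition given by the connected components of $F$, with $e$ supported on $V_1$, so that $F\setminus\{e\}$ has three components $V_{1a},V_{1b},V_2$. Pick a massive edge $\varepsilon$ linking $V_1$ to $V_2$ (it exists because $F\in\calT^2_{G,\mt}$); by relabeling $V_{1a}\leftrightarrow V_{1b}$ if necessary, I may assume $\varepsilon$ links $V_{1b}$ with $V_2$. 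Then $T:=F\cup\{\varepsilon\}$ is a spanning tree, and the fundamental cocircuit $C^\ast(e,T)$ of $e$ in the graphic matroid $\matroidM^1_G$ is exactly the set of edges of $G$ linking $V_{1a}$ to $V_{1b}\cup V_2$.

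I then split on whether $F'\cap C^\ast(e,T)$ is non-empty. If it is, pick any $e'\in F'\cap C^\ast(e,T)$; since $F$-edges sit within the components of $F$, such $e'$ automatically lies in $F'\setminus F$, and the $2$-forest $(F\setminus\{e\})\cup\{e'\}$ has vertex partition either $\{V_1,V_2\}$ or $\{V_{1a}\cup V_2,V_{1b}\}$, depending on which side of the bipartition the endpoints of $e'$ meet. In either sub-possibility the massive edge $\varepsilon$ still links the two new components, so $(F\setminus\{e\})\cup\{e'\}\in\calT^2_{G,\mt}$. Otherwise $F'\cap C^\ast(e,T)=\emptyset$, and no $F'$-edge crosses the bipartition $V_{1a}\sqcup(V_{1b}\cup V_2)$. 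Because $F'$ is a spanning $2$-forest, its two components must then be exactly $V_{1a}$ and $V_{1b}\cup V_2$; thus the edges of $G$ linking the two components of $F'$ are precisely $C^\ast(e,T)$, and the hypothesis $F'\in\calT^2_{G,\mt}$ supplies a massive edge inside $C^\ast(e,T)$. Inside the $F'$-component $V_{1b}\cup V_2$---a subtree---connectedness forces at least one $F'$-edge $e'$ with one endpoint in $V_{1b}$ and one in $V_2$. This $e'$ is not in $F$, and adding it to $F\setminus\{e\}$ glues $V_{1b}$ to $V_2$, producing a $2$-forest whose inter-component edges form $C^\ast(e,T)$ and therefore contain the massive edge found above.

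The main obstacle is the second case, in which $F'$ contributes no edge of the cut-set $C^\ast(e,T)$ to play the role of $e'$. The resolution is that the very absence of $F'$-edges across this cut rigidifies $F'$: the $F'$-partition must coincide with the $C^\ast(e,T)$-partition of $V_G$, so any exchanged $2$-forest obtained by adding an interior $F'$-edge reproduces the cut $C^\ast(e,T)$, and the massiveness promised by $F'\in\calT^2_{G,\mt}$ then supplies the required massive edge for the new basis.
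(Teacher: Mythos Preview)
Your proof is correct and follows essentially the same route as the paper's: both remove the chosen edge to obtain three components, fix the location of the massive completing edge $\varepsilon$ for $F$, handle the two ``easy'' cases where an $F'$-edge lands in the right cut (your $C^\ast(e,T)$ packages these into one), and in the remaining case deduce that the components of $F'$ are forced so that the massive completing edge of $F'$ certifies the exchanged $2$-forest. The only cosmetic difference is your use of the fundamental-cocircuit language; one small imprecision is the justification that $e'\in F'\setminus F$ in Case~1 (``$F$-edges sit within the components of $F$'' does not by itself exclude an $F$-edge linking $V_{1a}$ to $V_{1b}$), but since the unique such $F$-edge is $e$ and you have already assumed $e\notin F'$, the conclusion stands.
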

\begin{proof}
We need to show that the set of massively truncated 2-forests, if
non-empty, satisfies the Exchange Axiom.
  
  Let $F,F'$ be massively truncated 2-forests and choose massive edges $e,e'$
  such that $T=F\cup\{e\}$ and $T'=F'\cup \{e'\}$ are spanning trees. Let
  $f\in F$ and consider the 3-forest $F\minus\{f\}$ with connected
  components $F_1,F_{2a},F_{2b}$ where $F_1$ is one component of $F$ and
  $F_{2a}\cup F_{2b}\cup\{f\}$ is the other. We need
  to show that for a suitable $g\in F'$, the set $(F\minus
  \{f\})\cup\{g\}$ is a massively truncated 2-forest.

  Since the 2-forests of $G$ form a matroid $\matroidM^2_G$, certain
  edges $g$ of $F'$ must combine with $(F\minus \{f\})$ to a
  2-forest. Moreover, the edge $e=T\minus F$ from above 
  links $F_1$ to either $F_{2a}$ or
  $F_{2b}$; without loss of generality we can and do assume that $e$
  links in fact $F_1$ to $F_{2a}$. 

  If some edge $g$ of $F'$ links a vertex of $F_{2a}$ to a vertex of
  $F_{2b}$, then $(F\minus\{f\})\cup\{g\}$ is a 2-forest on the same
  connected components as $F$ and thus can be completed by the massive
  edge $e$ to a spanning tree. Similarly, if any edge $g$ of $F'$
  links $F_{1}$ to $F_{2b}$, then
  $(F\minus\{f\})\cup\{g\}$ is a 2-forest in which $F_{2a}$ is a
  connected component and again the 2-forest $(F\minus\{f\})\cup\{g\}$
  can be completed by the massive edge $e$ to a spanning tree.  So,
  assume from now on that $F'$ has no edges from $F_{2a}$ to $F_{2b}$,
  and no edges from $F_1$ to $F_{2b}$.

  In that case, the vertices of $F_{2b}$ must be exactly the vertices
  in one of the two components of the 2-forest $F'$ and therefore the
  other component of $F'$ uses exactly the vertices of $F_1\cup
  F_{2a}$.  In particular there is guaranteed to be an edge $g$ in
  $F'$ from a vertex of $F_1$ to a vertex of $F_{2a}$. Note that
  $(F\minus \{f\})\cup\{g\}$ is then a 2-forest. Now recall that
  $F'=T'\minus\{e'\}$ is a massive truncation. Clearly, $e'$
  must connect the two components of $F'$ and so links 
  $F_{2b}$ to either $F_1$ or $F_{2a}$. In that case,
  $(F\minus \{f\})\cup\{g\}$ is a massive truncation by $e'$.
\end{proof}

\begin{dfn}
  We denote the matroid of massively truncated 2-forests of $G$
  from Lemma \ref{lem-mass-trunc} by $\matroidM^2_{G,\mt}$.
\end{dfn}

We show next that the matroid of massively truncated 2-forests is also
quotient of $\calT^1_G$, but we use a different strategy than for the
momentous 2-forests.

\begin{dfn}
Suppose $\matroidM$ is a matroid on the set $E$ and $ E'\subseteq
E$. Define $\calB_{/E'}$ to be the set of subsets $B$ of $E$ that have
the property that there is some $e'\in E'\minus B$ such that $B\cup
\{e'\}$ is a basis in $\matroidM$.

By Lemma \ref{lem-M/E'} below, the sets in $\calB_{/E'}$ are the bases
of a matroid that we denote $\matroidM_{/E'}$. Note that when $E'$ has
rank zero (so $E'$ contains only loops) then $\matroidM_{/E'}$ is the
trivial matroid.
\end{dfn}

\begin{lem}\label{lem-M/E'}
  The set $\calB_{/E'}$ is the set of bases of a matroid.
\end{lem}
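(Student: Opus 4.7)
The plan is to verify the basis exchange axiom for $\calB_{/E'}$ directly. Set $r:=\rank(\matroidM)$. I first observe that for $e'\in E'\minus B$ the set $B\cup\{e'\}$ is a basis of $\matroidM$ exactly when $e'\notin\Span(B)$, so
\[
\calB_{/E'}=\{B\subseteq E\mid B\text{ independent in }\matroidM,\ |B|=r-1,\ E'\not\subseteq\Span(B)\}.
\]
In particular every element of $\calB_{/E'}$ has cardinality $r-1$. Assuming $\rank_\matroidM(E')\geq 1$ (the trivial-matroid case being already dispatched by the definition), $\calB_{/E'}$ is also nonempty: take any basis of $\matroidM$ containing a non-loop of $E'$ and remove that non-loop.

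For the exchange axiom, take $B_1,B_2\in\calB_{/E'}$ and $x\in B_1\minus B_2$; I need $y\in B_2\minus B_1$ with $(B_1\minus\{x\})\cup\{y\}\in\calB_{/E'}$. Pick $e_2\in E'\minus\Span(B_2)$, so $\tilde B_2:=B_2\cup\{e_2\}$ is a basis of $\matroidM$. Since $B_1\minus\{x\}$ is independent of size $r-2$, the independence-augmentation property applied with $\tilde B_2$ produces distinct $y_1,y_2\in\tilde B_2\minus(B_1\minus\{x\})$ such that $(B_1\minus\{x\})\cup\{y_1,y_2\}$ is a basis of $\matroidM$. Up to relabeling either $y_2=e_2$, or both $y_1,y_2$ lie in $B_2$ (and hence in $B_2\minus B_1$, since $x\notin B_2$).

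If $y_2=e_2$, then $y_1\in B_2\minus B_1$ and $(B_1\minus\{x\})\cup\{y_1,e_2\}$ is a basis of $\matroidM$; since $e_2\notin(B_1\minus\{x\})\cup\{y_1\}$ by the choice of the augmenting pair, this witnesses $(B_1\minus\{x\})\cup\{y_1\}\in\calB_{/E'}$ and finishes the exchange. Otherwise $y_1,y_2\in B_2\minus B_1$, and I claim at least one of $(B_1\minus\{x\})\cup\{y_i\}$ lies in $\calB_{/E'}$. If both failed, both flats $\Span((B_1\minus\{x\})\cup\{y_i\})$ would contain $E'$; these are rank-$(r-1)$ flats of $\matroidM$, and they are distinct because $(B_1\minus\{x\})\cup\{y_1,y_2\}$ has rank $r$. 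By submodularity the intersection of two distinct rank-$(r-1)$ flats has rank at most $r-2$; here it contains the rank-$(r-2)$ set $B_1\minus\{x\}$ and therefore equals $\Span(B_1\minus\{x\})$. This would force $E'\subseteq\Span(B_1\minus\{x\})\subseteq\Span(B_1)$, contradicting $B_1\in\calB_{/E'}$.

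The main obstacle I anticipate is precisely this ``both $y_i\in B_2\minus B_1$'' configuration: a direct appeal to strong basis exchange in $\matroidM$ between $\tilde B_1:=B_1\cup\{e_1\}$ and $\tilde B_2$ can return $e_2$ as the sole valid exchange partner, which lies outside $B_2$ and is ineligible in $\calB_{/E'}$. Extracting two augmenting elements at once, and then ruling out the double-failure through the flat-intersection argument above, is what drives the proof through.
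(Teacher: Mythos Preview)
Your proof is correct.

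The paper's argument is considerably shorter: it lifts $B,B'$ to bases $B\cup\{e\},\,B'\cup\{e'\}$ of $\matroidM$ via witnesses $e,e'\in E'$, applies a single basis exchange in $\matroidM$ (removing $f\in B$ from the first basis and receiving some $g\in B'\cup\{e'\}$), and observes that $(B\setminus\{f\})\cup\{g\}$ lies in $\calB_{/E'}$ with witness $e$. Your route is different and more careful: you augment $B_1\setminus\{x\}$ by \emph{two} elements of $\tilde B_2$, and when neither augmenting element is $e_2$ you settle the matter with a hyperplane-intersection (submodularity) argument showing that both candidate sets cannot simultaneously have closure containing $E'$.

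What your extra work buys is exactly the point you flag at the end: in the paper's single exchange step the returned element $g$ is guaranteed only to lie in $B'\cup\{e'\}$, and when $g=e'$ one gets an element of $\calB_{/E'}$ that is \emph{not} of the form $(B\setminus\{f\})\cup\{g\}$ with $g\in B'$, so the exchange axiom for $\calB_{/E'}$ is not yet verified. Your two-element augmentation together with the flat-intersection argument closes this case explicitly. (The paper's approach can be repaired by one further exchange step when $g=e'$ occurs, but as written it leaves this case implicit.)
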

\begin{proof}
  Let $B,B'\in\calB_{/E'}$ and choose $f\in B$. Let $e\in E'\minus B$ and
  $e'\in E'\minus B'$ be
  such that $B\cup\{e\}, B'\cup\{e'\}$ are bases for
  $\matroidM$. Then for some element $g$ of
  $B'\cup\{e'\}$ the Exchange Axiom in $\matroidM$ guarantees that
  $((B\cup\{e\})\minus\{f\})\cup\{g\}$ is a basis for
    $\matroidM$. Since $f\neq e\neq g$, $(B\minus\{f\})\cup\{g\}$ is the new
      basis for $\matroidM_{/E'}$ that we want.
\end{proof}

\begin{rmk}
  Note that the independent sets of $\matroidM_{/E'}$ are those
  contained in a basis of $\matroidM_{/E'}$ and therefore are the
  susets of $E$ that can be augmented to an independent set in
  $\matroidM$ by an element of $E'$. In particular, if $e'\in E'$ is a
  loop, then $\matroidM_{/E'}=\matroidM_{/(E'\minus\{e'\})}$.
\end{rmk}

\begin{lem}
  The matroid $\matroidM_{/E'}$ is a quotient of the matroid
  $\matroidM$.
\end{lem}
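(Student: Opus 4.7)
My plan is to prove the lemma directly from the paper's definition of quotient (every circuit of $\matroidM$ is a union of circuits of $\matroidM_{/E'}$). The preliminary observation is that unfolding the definition of $\calB_{/E'}$ forces $I \subseteq E$ to be independent in $\matroidM_{/E'}$ exactly when $I$ is independent in $\matroidM$ and some $e' \in E' \setminus I$ can be added to $I$ while preserving $\matroidM$-independence; equivalently, when $I$ is $\matroidM$-independent and $I$ does not $\matroidM$-span $E'$. Dually, a set is dependent in $\matroidM_{/E'}$ iff it is already dependent in $\matroidM$ or it $\matroidM$-spans $E'$. These two failure modes drive a case split on the given $\matroidM$-circuit $C$.

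Let $C$ be an $\matroidM$-circuit. In the easy case $C$ does not $\matroidM$-span $E'$: each proper subset $C \setminus \{e\}$ is $\matroidM$-independent with the same $\matroidM$-span as $C$ (a general property of circuits) and therefore still fails to span $E'$, so $C \setminus \{e\}$ is $\matroidM_{/E'}$-independent, while $C$ itself is $\matroidM_{/E'}$-dependent; thus $C$ is itself an $\matroidM_{/E'}$-circuit. In the hard case $C$ does $\matroidM$-span $E'$: for each $e \in C$ I would show that $e$ lies in the closure of $C \setminus \{e\}$ in $\matroidM_{/E'}$, which guarantees an $\matroidM_{/E'}$-circuit inside $C$ passing through $e$, so that letting $e$ range over $C$ covers $C$ by $\matroidM_{/E'}$-circuits. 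This reduces to producing inside $C \setminus \{e\}$ an $\matroidM_{/E'}$-independent set of size $|C|-2$. Starting from the $\matroidM$-independent set $C \setminus \{e\}$ and picking a non-loop $e^* \in E'$, I would either take $f := e^*$ (if $e^* \in C \setminus \{e\}$), or take $f$ to be any element of the fundamental $\matroidM$-circuit of $e^*$ with respect to $C \setminus \{e\}$; in both cases $C \setminus \{e, f\}$ is $\matroidM$-independent and fails to $\matroidM$-span $e^*$, hence fails to $\matroidM$-span $E'$, hence is $\matroidM_{/E'}$-independent of the desired size.

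The main obstacle is the hard case, specifically the production of $f$, and this is precisely where the hypothesis that $E'$ has positive rank in $\matroidM$ enters: without it, no non-loop $e^* \in E'$ exists, but then $\matroidM_{/E'}$ is trivial by the paper's convention and the claim is vacuous. A cleaner alternative bypasses the case split altogether: let $N$ be the principal extension of $\matroidM$ to ground set $E \cup \{*\}$ with the new element $*$ placed freely on the flat $\Span(E')$. Then $N \setminus \{*\} = \matroidM$ by construction; matching bases shows $N / \{*\} = \matroidM_{/E'}$; and the quotient relation follows from submodularity of $\rank_N$, which at $X \subseteq Y \subseteq E$ gives $\rank_N(Y \cup \{*\}) + \rank_N(X) \leq \rank_N(Y) + \rank_N(X \cup \{*\})$, rearranging to precisely the rank inequality characterizing quotient.
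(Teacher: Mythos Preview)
Your first argument is correct and tracks the paper's proof closely: the same characterization of $\matroidM_{/E'}$-independence, the same case split on whether $C$ $\matroidM$-spans $E'$, and the same verification that $C$ is itself an $\matroidM_{/E'}$-circuit in the easy case. In the hard case the two proofs diverge in execution: the paper explicitly \emph{names} the covering circuits (a minimal $C'\subseteq C$ with $E'\subseteq\Span_\matroidM(C')$, together with the sets $C\setminus\{c'\}$ for $c'\in C'$, each shown to be an $\matroidM_{/E'}$-circuit), whereas you only argue \emph{existence} of a circuit through each $e\in C$ by exhibiting an $\matroidM_{/E'}$-independent subset of $C\setminus\{e\}$ of size $|C|-2$. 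Your route is a little slicker, since it sidesteps the paper's somewhat involved verification that each $C\setminus\{c'\}$ is a circuit; the paper's route is fully constructive. One expository gap worth filling: your ``reduction'' step silently uses that every $(|C|-1)$-element subset of $C$ still $\matroidM$-spans $E'$ (hence is $\matroidM_{/E'}$-dependent), so that adjoining $e$ to your size-$(|C|-2)$ independent set forces a circuit through $e$. You invoke this span property of circuits in the easy case; it should be cited again here.

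Your alternative via the principal extension $N=\matroidM+_{\Span(E')}*$ is genuinely different from anything in the paper and is conceptually cleaner: once one checks $N\setminus\{*\}=\matroidM$ and $N/\{*\}=\matroidM_{/E'}$ (your basis-matching is correct), the quotient relation drops out of a single application of submodularity, with no case analysis at all. The only caveat is that this uses the rank-inequality characterization of matroid quotients rather than the circuit-union definition the paper adopts; these are standard equivalents (see e.g.\ Oxley, Prop.~7.3.6), but since the paper works from the circuit definition you should cite the equivalence.
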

\begin{proof}
  We shall asssume that $E'$ has positive rank.  
  Let $C$ be a circuit of $\matroidM$; in particular, $|C|=\rank(C)+1$.
  
  Suppose first, that the span of $C$ does not contain all of $E'$,
  and choose $e'\in E'\minus C$. Select $c\in C$. Then $|(C\minus
  \{c\})\cup\{e'\}|= |C|=\rank(C)+1=\rank(C\cup\{e'\}) =\rank((C\minus
  c)\cup \{e'\})$. It follows that $(C\minus \{c\})\cup \{e'\}$ is
  independent in $\matroidM$ and hence $C\minus \{c\}$ is independent
  in $\matroidM_{/E'}$. Thus, in this case $C$ is a circuit in
  $\matroidM_{/E'}$.
  
  Now suppose $E'\subseteq \Span(C)$. Let $C'$ be a minimal subset of
  $C$ such that $\Span(C')$ contains $E'$. Note that $C'$ is dependent
  in $\matroidM_{/E'}$. Take $c'\in C'$; then $\Span(C'\minus\{c'\})$
  does not contain $E'$ and so there is $e\in E'$ such that
  $\rank(C'\minus\{c'\})<\rank((C'\minus\{c'\})\cup \{e'\})$. Since
  $C'\minus\{c'\}$ is indepedent in $\matroidM$, so that rank agrees
  with cardinality, the same is true for $(C'\minus\{c'\})\cup \{e'\}$ and
  so this set must be independent in $\matroidM$. It follows that the
  $\matroidM_{/E'}$-dependent set $C'$ is covered by the
  $\matroidM_{/E'}$-independent sets $C'\minus\{c'\}$ and thus a
  $\matroidM_{/E'}$-circuit.

  Next, consider $C\minus \{c'\}$ for $c'\in C'$. Since $C$ is an
  $\matroidM$-circuit, $\Span(C\minus\{c'\})=\Span(C)\supseteq E'$ and
  it follows that $C\minus\{c'\}$ is $\matroidM_{/E'}$-dependent. For
  any $c\neq c'$ in $C$, we have $\rank((C\minus \{c,c'\})\cup
  E')>\rank(C\minus \{c,c'\})=|C|-2$. (Indeed, a flat that contains
  $C'\minus \{c'\}$ and $E'$ must contain $C'$, and so any flat that
  contains $(C\minus \{c,c'\})\cup E'$ must contain $C\minus\{c\}$, of
  rank $|C|-1$). It follows, that for a suitable element $e'\in
  E'\minus(C\minus \{c,c'\})$, $(C\minus \{c,c'\})\cup \{e'\}$ is
  $\matroidM$-independent. It follows that $C\minus\{c,c]\}$ is
      $\matroidM_{/E'}$-independent, and so $C\minus \{c'\}$ is a
      $\matroidM_{/E'}$-circuit.

  The union of all circuits discussed covers $C$.
\end{proof}

\begin{cor}
  The matroid $\matroidM^2_{G,\mt}$ is a quotient of $\matroidM^1_G$.
\end{cor}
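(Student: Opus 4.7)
The plan is to recognize the matroid $\matroidM^2_{G,\mt}$ as the instance $(\matroidM^1_G)_{/E_m}$ of the general construction introduced above, where $E_m=\{e\in E\mid m_e\neq 0\}$ is the set of massive edges, and then invoke the previous lemma.

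First I would unwind the definitions. By construction, a subset $B\subseteq E$ is a basis of $(\matroidM^1_G)_{/E_m}$ if and only if there exists $e'\in E_m\setminus B$ such that $B\cup\{e'\}$ is a spanning tree of $G$. Reading this in the other direction, such a $B$ is precisely a set of the form $T\setminus\{e'\}$ with $T$ a spanning tree and $e'\in T\cap E_m$, which is exactly the defining property of a massive truncation. Hence
\[
\calT^2_{G,\mt}\;=\;\calB_{(\matroidM^1_G)_{/E_m}},
\]
and the two matroids coincide on the nose.

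Second, I would apply the previous lemma, which asserts that $\matroidM_{/E'}$ is a quotient of $\matroidM$ for any matroid $\matroidM$ and any $E'\subseteq E$. Taking $\matroidM=\matroidM^1_G$ and $E'=E_m$, this yields at once that $\matroidM^2_{G,\mt}$ is a quotient of $\matroidM^1_G$.

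Finally, I would briefly address the boundary cases. If $E_m=\emptyset$, there are no massive truncations and $\matroidM^2_{G,\mt}$ is the trivial matroid (every singleton a circuit), which is a quotient of any matroid; and the lemma is applied vacuously. If $E_m$ is nonempty, then since $G$ is (s1I) there are no graph-loops, so every massive edge has positive rank in $\matroidM^1_G$ and the positive-rank hypothesis of the previous lemma is in force. There is no substantive obstacle here: the content of the result is the identification $\matroidM^2_{G,\mt}=(\matroidM^1_G)_{/E_m}$, after which the quotient property is inherited from the general lemma.
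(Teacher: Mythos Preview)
Your proposal is correct and follows exactly the paper's approach: identify $\matroidM^2_{G,\mt}$ with $(\matroidM^1_G)_{/E_m}$ and apply the preceding lemma. The paper's proof is the same two-line argument, only terser; your added verification of the identification and discussion of the boundary cases are accurate elaborations rather than a different route.
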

\begin{proof}
  In the previous lemma, take $\matroidM=\matroidM^1_G$ and $E'$ to be
  the massive edges. Then the definition of $\matroidM^2_{G,\mt}$
  matches that of $(\matroidM^2_G)_{/E'}$.
\end{proof}

\begin{rmk}
  The term ``quotient matroid'' comes historically from the following
  quotient construction of vector spaces.  Suppose $R=\{r_e\}_{e\in
    E}\subseteq \RR^n$ is a collection of vectors. It is often
  referred to a \emph{representation of a matroid}. Indeed, define a
  matroid $\matroidM$ by selecting as bases of the matroid the maximal independent
  subsets of $R$. That these form indeed a matroid follows from the
  Exchange Property in linear algebra. The rank function corresponds
  to span dimension and independence is the same on both sides.

  It is well-known that not all matroids are representable. In fact,
  most are not. 

  Suppose further that $E'\subseteq E$ is a distinguished subset. Let
  $\rho$ be a general linear combination of the elements of
  $R':=\{r_e\}_{e\in E'}$ and consider the quotient set $R/\rho$ in
  the quotient space $R^n/\RR\cdot \rho$. Let $\matroidM_{/\rho}$ be
  the matroid defined by the representation $R/\rho$. Then:
  \begin{enumerate}
  \item The bases $\calB_{\matroidM_{/\rho}}$ of $\matroidM_{/\rho}$
    are precisely the sets $\{e_1,\ldots,e_{\rank(\matroidM)-1}\}$ for
    which there exists $e\in E'$ such that the collection
    $\{r_{e_1},\ldots,r_{e_{r-1}},r_e\}$ is a basis for
    $\matroidM$. In particular, the rank of $\matroidM_{/\rho}$ is
    $\rank(\matroidM)-1$.
  \item The matroid $\matroidM_{/\rho}$ is a quotient of $\matroidM$.
  \end{enumerate}
   It is customary, if $E'=E$, to say that $\matroidM_{\rho}$ is \emph{the
     truncation} of $\matroidM$.

   We do not know whether the matroid of
   momentous 2-forests allows a quotient construction by truncations,
   but it seems unlikely.
\end{rmk}

\subsection{2-forests of $\calG_m$}

Astoundingly, the union of the two matroids $\matroidM^2_{G,\neq}$ and
$\matroidM^2_{G,\mt}$ is also a matroid, as we show next.

\begin{prop}
  The set of 2-forests in the Feynman graph $G$
  that arises as the union of the momentous 2-forests and the
  massively truncated 2-forests forms the set of bases of  a matroid.
\end{prop}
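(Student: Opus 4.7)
The plan is to realize $\calB := \calT^2_{G,\neq} \cup \calT^2_{G,\mt}$ as the basis set of a matroid obtained by combining the quotient construction of Lemma \ref{lem-M/E'} with standard matroid restriction, after first enlarging $G$. Concretely, I form an auxiliary graph $\tilde G$ by adjoining to $G$, for every pair of distinct external vertices $v, v' \in \VExt$, a new ``dummy'' edge $d_{v,v'}$; write $D$ for this set of dummy edges, $E^{\mt} \subseteq E_G$ for the massive edges of $G$, and set $E' := E^{\mt} \cup D$. Lemma \ref{lem-M/E'} applied to the graphic matroid $\matroidM^1_{\tilde G}$ then yields a matroid $\matroidM := (\matroidM^1_{\tilde G})_{/E'}$ on $E_{\tilde G}$ of rank $|V_G|-2$, whose bases are those $B \subseteq E_{\tilde G}$ of size $|V_G|-2$ for which some $e' \in E' \minus B$ completes $B$ to a spanning tree of $\tilde G$. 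The standard matroid restriction $\matroidM | E_G$ to the subset $E_G$ of the ground set is then a matroid on $E_G$; I claim its bases are exactly $\calB$.

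The identification proceeds from the definitions in both directions. If $F \in \calT^2_{G,\mt}$ then by definition some $e^* \in E^{\mt} \minus F$ completes $F$ to a spanning tree of $G$, hence of $\tilde G$, so $F$ is an $\matroidM$-basis sitting inside $E_G$. If $F \in \calT^2_{G,\neq}$ then its components carry external vertices $v, v'$ (with $v\neq v'$), and the dummy edge $d_{v,v'} \in D \minus F$ completes $F$ to a spanning tree of $\tilde G$, so again $F$ is an $\matroidM$-basis inside $E_G$. Conversely, if $B \subseteq E_G$ is an $\matroidM$-basis with witness $e' \in E'$, then $B$ is a $2$-forest of $G$ of size $|V_G|-2$; depending on whether $e' \in E^{\mt}$ or $e' = d_{v,v'} \in D$, the set $B$ lies in $\calT^2_{G,\mt}$ or in $\calT^2_{G,\neq}$ (in the dummy case, $v$ and $v'$ must lie in distinct components of $B$ in order for $d_{v,v'}$ to complete it to a spanning tree). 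Hypothesis \ref{hyp}.(3) ensures $\calB \neq \emptyset$, so some $\matroidM$-basis is contained in $E_G$; therefore the rank of $\matroidM | E_G$ equals the full rank $|V_G|-2$ of $\matroidM$, and the bases of $\matroidM | E_G$ coincide precisely with the $\matroidM$-bases contained in $E_G$, which is $\calB$.

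The step that I expect will require care rather than any deep obstacle is the preliminary sanity check on the auxiliary data: under (s1I) no edge of $E^{\mt}$ is a graphic loop, and each dummy edge joins two distinct vertices, so $E' = E^{\mt} \cup D$ contains no loops of $\matroidM^1_{\tilde G}$ and hence has positive rank whenever it is non-empty — and non-emptiness of $E'$ is forced by Hypothesis \ref{hyp}.(3), which guarantees either the presence of a massive edge or of at least two external vertices. Once this is in place, matroid quotient and matroid restriction are routine, and the identification of bases in the middle paragraph completes the proof.
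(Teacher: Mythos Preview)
Your proof is correct and takes a genuinely different route from the paper's. The paper verifies the basis exchange axiom directly by a case analysis: since $\matroidM^2_{G,\neq}$ and $\matroidM^2_{G,\mt}$ are already known to be matroids, only the two ``mixed'' cases (one basis momentous, the other massively truncated) remain, and each is handled by a careful geometric argument tracking how the three components of $F\minus\{e\}$ can be reconnected by edges of $F'$. Your argument instead realises the entire Feynman $2$-forest collection in one stroke as the restriction to $E_G$ of the principal truncation $(\matroidM^1_{\tilde G})_{/E'}$, where the auxiliary graph $\tilde G$ encodes the external structure via dummy edges and $E'=E^{\mt}\cup D$ simultaneously accounts for both the massive-truncation and the momentous conditions. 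This is shorter and more conceptual: it explains \emph{why} the union is a matroid (both conditions are instances of ``completable to a spanning tree by a distinguished edge''), it gives an explicit matroid-theoretic model for $\matroidM^2_{G,\Feynman}$, and it makes the quotient-of-$\matroidM^1_G$ property essentially automatic via the enlarged graph. The paper's approach, by contrast, is self-contained at the level of graph combinatorics and does not require introducing $\tilde G$; it also yields finer information about which specific exchanges are possible. Your reliance on Lemma~\ref{lem-M/E'} is entirely legitimate since the paper has already established it, and your sanity checks (non-emptiness and positive rank of $E'$ under Hypothesis~\ref{hyp}) are exactly what is needed to make the restriction step go through.
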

\begin{proof}
  Let $F,F'$ be in $\matroidM^2_{G,\Feynman}$.
  We need to show the validity of the simple Steiner exchange axiom.
  Since $\matroidM^2_{G,\mt}$ and $\matroidM^2_{G,\neq}$ are matroids by
  Lemmas \ref{lem-mass-trunc} and \ref{lem-moment}, it suffices to
  consider the two cases listed below.

  \emph{Case 1: $F$ is momentous and $F'$ is massively truncated.} Let
  $e\in F$ be any edge; then $F\minus\{e\}$ is a 3-forest, with
  components denoted $F_1, F_{2a},F_{2b}$ where $e$ links $F_{2a}$ to
  $F_{2b}$.  Since the set of all 2-forests is in fact a matroid,
  there is at least one edge $g\in F'$ such that $(F\minus
  \{e\})\cup\{g\}$ is a 2-forest. If this is a momentous 2-forest we
  are done with this case. So, in the sequel we assume that no edge of
  $F'$ combines with $(F\minus \{e\})$ to a momentous 2-forest.

  Let $g\in F'$ form a 2-forest $ (F\minus \{e\})\cup\{g\}$.  Then,
  since $F$ itself is momentous and $(F\minus \{e\})\cup\{g\}$
  contains no circuits, such $g$ cannot link $F_{2a}$ to $F_{2b}$ and
  so will connect either $F_1$ to $F_{2a}$, or $F_1$ to
  $F_{2b}$. Depending on the case, the implication would be that the
  external vertices are either completely contained in $F_1\cup
  F_{2a}$ or in $F_{2b}$, or in $F_1\cup F_{2b}$ or in $F_{2a}$. In
  other words, the external vertices are either contained completely
  in $F_1\cup F_{2a}$ or in $F_1\cup F_{2b}$. Without loss of
  generality, let us assume they are all inside $F_1\cup F_{2b}$ and
  so none is in $F_{2a}$. Note that momentousness of $F$ implies that
  some external vertices are in $F_1$ and some in $F_{2b}$.
   In particular then, the egde $g$ from the start of this
  paragraph that creates the non-momentous 2-forest $(F\minus
  \{e\})\cup\{g\}$ connects a vertex of $F_1$ to a vertex of $F_{2b}$.

  It follows that if for no edge $g\in F'$ the set $(F\minus
  \{e\})\cup\{g\}$ is a momentous 2-forest, then all edges of $F'$ are
  either supported on one of $F_1,F_{2a},F_{2b}$, or they must connect
  $F_1$ to $F_{2b}$. That means that all edges of $F'$ are supported
  either on $F_{2a}$, or on $F_1\cup F_{2b}$, implying that the vertex
  sets of $F_{2a}$ and $F_1\cup F_{2b}$ are the same as the vertex
  sets of the two components of $F'$.

  Now recall that $F'$ is massively truncated, and let $f$ be a
  massive edge such that $F'\cup\{f\}$ is a spanning tree. By the
  previous paragraph, $f$ must link a vertex of $F_1\cup F_{2b}$ to
  one of $F_{2a}$. It follows that 
  $(F\minus\{e\})\cup \{g\}$ is massively truncated via $f$.

  \emph{Case 2: $F$ is massively truncated and $F'$ is momentous.} Fix
  an edge $e\in F$, and a massive edge $f$ such that $F\cup\{f\}$ is a
  spanning tree. Then $F\minus \{e\}$ has three components
  $F_1,F_{2a},F_{2b}$ with $e$  linking a vertex from $F_{2a}$
  to one from $F_{2b}$, and
  $f$ linking 
  $F_1$ to either $F_{2a}$ or $F_{2b}$. Without loss of
  generality, assume the latter case.

  Since 2-forests form a matroid, at least one edge $g$ of $F'$ turns
  $F\minus \{e\}$ into a 2-forest.  Suppose all edges $g$ of $F'$ are
  either supported on one of $F_1,F_{2a}$ or $F_{2b}$, or make it
  impossible to certify $(F\minus\{e\})\cup\{g\}$ as massively
  truncated via $f$ (\emph{i.e.}, $(F\minus \{e\})\cup\{g\}\cup\{f\}$
  contains a circuit). Then all edges of $F'$ are either supported on
  one of $\{F_1,F_{2a},F_{2b}\}$, or link $F_1$ to $F_{2b}$. Note that
  therefore an edge $g\in F'$ linking $F_1$ to $F_{2b}$ must exist, as
  else $F'$ should have more than two components.  Since $F'$ has
  exactly two components, these must be supported on $F_1\cup F_{2b}$
  and $F_{2a}$ respectively. Since $F'$ is momentous, $F_{2a}$
  contains some but not all external vertices. Then with the edge
  $g\in F'$ that links a vertex from $F_1$ to one of $F_{2b}$, we find
  that $(F\minus \{e\})\cup \{g\}$ is momentous, finishing the second
  case and the proof.
\end{proof}

\begin{dfn}
We denote the matroid from the previous lemma by
$\matroidM^2_{G,\Feynman}$.
\end{dfn}

\section{Main Theorems}


\subsection{All 2-forests present}

We recall a result from \cite{HT} that will be used in
the proof below.


\begin{thm}
In the massive case,  the semigroup spanned by the lifts of $\Supp
(\calG_m)$ is normal.\qed
\end{thm}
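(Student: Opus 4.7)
The plan is to follow the strategy of Helmer--Tellander, reducing normality of $\NN A_m$ to an integer-decomposition property of the support polytope $P_m$. In the massive case we have $E_m = E$, so
\[
\calG_m = \calU \cdot (1 + \Sigma_m) + \calF_0^W,
\]
and $\Supp(\calG_m)$ splits into three visible pieces: the cographic basis indicators $\boldv_{E\minus T}$ for $T \in \calT^1_G$ coming from $\calU$; the squarefree 2-forest complements $\boldv_{E\minus F}$, which arise both as squarefree contributions of $\calU \cdot \Sigma_m$ (every edge being massive) and from momentous 2-forests in $\calF_0^W$; and the non-squarefree vectors $\boldv_{E\minus T} + \bolde_e$ with $e \notin T$ produced by $\calU \cdot \Sigma_m$. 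Geometrically, the first family lives at height $d = |E|-|V|+1$, while the latter two live at height $d+1$.

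I would then identify the upper level as a Minkowski sum: $\Supp(\calU \cdot \Sigma_m) = \Supp(\calU) + \Supp(\Sigma_m)$ in the massive case (no cancellations, by Hypothesis \ref{hyp}), and this set already contains every squarefree 2-forest complement appearing in $\calF_0^W$, so the upper level equals $P^{1,\perp}_G + \Delta_m$. The key inputs are that $P^{1,\perp}_G$ is a matroid polytope, hence has normal basis semigroup by White's theorem (cited in the excerpt), and that $\Delta_m$ is the standard unimodular simplex. Edge-unimodularity of matroid polytopes \cite{GGMS} lets one see the Minkowski sum $P^{1,\perp}_G + \Delta_m$ as itself having the integer decomposition property. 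The full $P_m$ is the convex hull of the union of $P^{1,\perp}_G$ at the lower level and $P^{1,\perp}_G + \Delta_m$ at the upper level, an arrangement I would compare to a Cayley sum of these two matroid-derived polytopes in a space one dimension higher.

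The final step is to upgrade this to saturation of $\NN A_m$. For a lattice point in the cone over the lifts at height $k$, I would first split the contribution between the two levels (i.e.\ decide how many of the $k$ summands come from $P^{1,\perp}_G$ versus from $P^{1,\perp}_G + \Delta_m$), apply White-type IDP within each level, and then use unimodularity of $\Delta_m$ together with a shifting argument to redistribute the $\bolde_e$-tails consistently across the $k$ summands. The main obstacle I foresee is precisely this redistribution: Minkowski sums of matroid polytopes with simplices are not IDP in general, so one must exploit the specific cographic structure of $G$ and the flag-matroid-polytope / Cayley--Minkowski bookkeeping developed in \cite{HT} to control the interaction between squarefree 2-forest vertices and the non-squarefree vertices $\boldv_{E\minus T} + \bolde_e$ with $e \notin T$.
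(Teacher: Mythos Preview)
The paper does not actually prove this statement: the theorem is stated with a trailing \verb|\qed| and introduced as ``We recall a result from \cite{HT}''. There is no proof in the paper to compare against; the result is simply imported from Helmer--Tellander.

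Your sketch is broadly consonant with the tools the paper attributes to \cite{HT} (edge-unimodularity, flag matroid polytopes, Cayley and Minkowski sums, IDP properties), and the structural decomposition you give of $\Supp(\calG_m)$ into the two levels $P^{1,\perp}_G$ and $P^{1,\perp}_G+\Delta_m$ is correct in the massive case. However, what you have written is a plan rather than a proof: you explicitly flag the redistribution step as ``the main obstacle'' and then defer to the ``flag-matroid-polytope / Cayley--Minkowski bookkeeping developed in \cite{HT}'' without carrying it out. That is precisely the nontrivial content of the Helmer--Tellander argument, so as it stands your proposal is a correct outline that stops short of the actual work. If you intend to reproduce the proof rather than cite it, you would need to either invoke a concrete IDP criterion (e.g.\ Tsuchiya's Cayley-sum theorem, as the paper does later in the massless case) or execute the level-splitting and redistribution argument in full.
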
\label{thm-massive}

In the massive case, the momenta are inconsequential since the terms
involving $\tilde\Sigma_m\cdot\calU$ alone ensure that the support of
$\calG_m$ is as large as it can possibly be for any mass and any
moment function---keeping in mind Hypothesis \ref{hyp}. We shall prove
here that the conclusion continues to hold as long as every 2-forest
of $G$ contributes to the support of $\calG$; it is immaterial which
terms with squares appear.

For this, recall Equations \eqref{eqn-withE1}, \eqref{eqn-withE2})
and set
\[
\calG_E:=\calU\cdot\tilde\Sigma_E +\calF_0^W.
\]
By the genericity hypothesis on the momenta, all momomials that appear
in $\calG_m$ also appear in $\calG_E$.

\begin{rmk}
An idea that will be used repeatedly is the obvious observation:
\begin{verse}
  (1-forest
complement) $\cup$ (element outside the 1-forest) $=$
(2-forest complement).
\end{verse}

By the (s1I) condition, any given edge $e$ is not a loop, and hence
  contained in a 1-forest $T$.  Removing $e$ one arrives at a 2-forest
  $F=T\minus\{e\}$, those to $\bsx^{E\minus T}$ and to $\bsx^{E\minus
    F}$. If $F$ labels a nonzero term in $\calG_m$ then $A_m$ contains
  two columns whose difference is $\bolde_e$. It follows that $\ZZ
  A_m$ contains $\ZZ^E$, and therefore also $\ZZ\times\ZZ^d$. Thus,
  when all 2-forests are present (and in most other cases), the
  lattice of $A_m$ agrees with the ambient lattice.
\end{rmk}

In the massive case, the supports of $\calG_E$ and $\calG_m$ agree,
and hence the semigroup generated by $(1,\Supp(\calG_E))$ is saturated
in $\ZZ\times \ZZ^d=\ZZ A_m$
by the Helmer--Tellander result. Our strategy will be to show that 
as long as all 2-forests of $G$ contribute to the support of $\calG_m$,
then the semigroup to the lifts of $\Supp(\calG_m)$ can be obtained
from the semigroup to the lifts of $\Supp(\calG_E)$ by intersecting
with suitable half-spaces of $\CC\times \CC^E$. The point is that
halfspaces contain saturated semigroups, and intersections of
saturated semigroups are saturated.

Let us denote by
\[
\mu_e\colon \CC^E\to \CC
\]
the $e$-th coordinate function on $\CC^E$; on
$\CC\times \CC^E$ we include the coordinate function $\mu_0$ on the
first factor into the notation.

  We can now prove the following generalization of \cite[Thm.~1.1,
    part 1]{HT}:
\begin{thm}\label{thm-main}
  Let $G$ be a (s1I) Feynman graph with mass function $m\colon E\to
  \RR_{\geq 0}$ satisfying Hypothesis \ref{hyp}. If
  $\matroidM^2_G=\matroidM^2_{G,\Feynman}$, or equivalently if every 2-forest
  of $\calG$ contributes to $\Supp(\calG)$, then the semigroup $\NN
  A_m$ is saturated and thus the semigroup ring $\KK[\NN A_m]$ is
  normal and Cohen--Macaulay for all fields $\KK$.
\end{thm}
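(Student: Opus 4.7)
The plan is to bootstrap off the massive case (Theorem \ref{thm-massive}): since \cite{HT} already shows that $\NN A_E$ is saturated, it is enough to realize $\NN A_m$ as the intersection of $\NN A_E$ with a family of integer half-spaces and then invoke the fact that intersections of saturated semigroups with half-spaces remain saturated. Under the standing hypothesis $\matroidM^2_G=\matroidM^2_{G,\Feynman}$, the columns of $A_E$ not already in $A_m$ are exactly the lifts $(1,\boldv_{E\minus T}+\bolde_e)$ for $T\in\calT^1_G$ and $e\in E_0\cap(E\minus T)$; these encode the squared monomials $x_e\cdot\bsx^{E\minus T}$ which $\calG_m$ kills because $m_e=0$. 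Define
\[
H_e:=\{v\in\RR\times\RR^E\mid \mu_e(v)\leq\mu_0(v)\},\qquad H:=\bigcap_{e\in E_0}H_e.
\]
Every column of $A_m$ lies in $H$ (its $e$-entry is $0$ or $1$ while $\mu_0=1$), so $\NN A_m\subseteq\NN A_E\cap H$. The Remark preceding the theorem together with the hypothesis yields $\ZZ A_m=\ZZ A_E=\ZZ\times\ZZ^E$, since each $\bolde_e$ arises as the difference of a $1$-forest complement and the $2$-forest complement obtained by dropping $e$, both of which are columns of $A_m$.

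The heart of the argument is the converse inclusion $\NN A_E\cap H\subseteq \NN A_m$. Given $v=\sum c_i\bolda_i^E$ with $\mu_e(v)\leq\mu_0(v)$ for every $e\in E_0$, partition the contributing columns by the value of their $e$-entry; writing $B_e,P_e,Z_e$ for the corresponding weighted sums (entries $2$, $1$, $0$ respectively), the constraint $2B_e+P_e\leq B_e+P_e+Z_e$ gives $B_e\leq Z_e$. Hence whenever an excluded column $(1,\boldv_{E\minus T_1}+\bolde_e)$ occurs with positive coefficient, some zero-column for $e$ is also present. The fundamental exchange is
\[
(1,\boldv_{E\minus T_1}+\bolde_e)+(1,\boldv_{E\minus T_2})=(1,\boldv_{E\minus T_1})+(1,\boldv_{E\minus(T_2\minus\{e\})}),
\]
valid whenever $e\in T_2$; both right-hand columns lie in $A_m$, the second because $T_2\minus\{e\}$ is a $2$-forest and, under the hypothesis, every $2$-forest complement belongs to $A_m$. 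Analogous local exchanges handle the remaining shapes of zero-column (a $2$-forest complement $\boldv_{E\minus F_2}$ with $e\in F_2$, or a non-squarefree column $\boldv_{E\minus T_2}+\bolde_{e'}$ with $e\in T_2$ and $e'\in E_m\minus\{e\}$). Iteration, monitored by the potential $\Phi(v):=\sum_{e\in E_0}B_e$, produces an expansion of $v$ using only columns of $A_m$.

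Once $\NN A_m=\NN A_E\cap H$ is established, saturatedness is immediate: for $v\in\RR_{\geq0}A_m\cap\ZZ A_m$ one has $v\in\RR_{\geq0}A_E\cap\ZZ A_E$, so Theorem \ref{thm-massive} delivers $v\in\NN A_E$; combined with $v\in H$ (since $v$ is a non-negative real combination of $A_m$-columns, each in $H$) one gets $v\in\NN A_E\cap H=\NN A_m$. Normality of $\KK[\NN A_m]$ for every field $\KK$, and hence Cohen--Macaulayness, follows from Hochster's theorem. The chief obstacle is the combinatorial claim of the middle paragraph: while the $1$-forest exchange is transparent, pairing an excluded column with a $2$-forest-type or a mixed-type zero-column requires more delicate substitutions that may temporarily worsen another massless coordinate, and proving that the potential $\Phi$ strictly decreases without cascading new defects is the core technical work — presumably leaning on the matroid-quotient structure of $\matroidM^2_{G,\Feynman}$ established in Section~\ref{sec-matroids}.
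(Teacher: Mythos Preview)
Your overall architecture is exactly the paper's: invoke Theorem~\ref{thm-massive} for the saturated semigroup $\NN A_E$, define the half-spaces $H_e=\{\mu_e\le\mu_0\}$ for $e\in E_0$, and show $\NN A_m=\NN A_E\cap\bigcap_{e\in E_0}H_e$ by rewriting any $A_E$-expansion of a point of $H$ into an $A_m$-expansion. The counting $B_e\le Z_e$, the lattice identification $\ZZ A_m=\ZZ\times\ZZ^E$, and your $1$-forest exchange are all correct; the paper organizes the rewriting one massless edge at a time rather than via a global potential $\Phi$, but that is only bookkeeping.

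The gap is exactly where you locate it, and your anxiety about cascades is a symptom of not yet having the right substitution. The ``analogous'' exchange fails for a $2$-forest zero-column $\boldv_{E\minus F}$: shifting $\bolde_e$ onto it yields a $3$-forest complement, which is not a column of $A_E$. The missing idea---and this is what the paper supplies---is to bring in the unique circuit $C$ of $T_1\cup\{e\}$. For the $2$-forest case, choose $f\in C\cap T_1$ with $(F\minus\{e\})\cup\{f\}\in\calT^2_G$ (possible since $\Span(C\minus\{e\})=\Span(C)$, so the span of $(C\minus\{e\})\cup(F\minus\{e\})$ contains $F$); the pair is then replaced by the two squarefree $2$-forest complements $E\minus(T_1\minus\{f\})$ and $E\minus((F\minus\{e\})\cup\{f\})$, both columns of $A_m$ precisely by the hypothesis $\matroidM^2_G=\matroidM^2_{G,\Feynman}$. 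For the mixed case $\boldv_{E\minus S}+\bolde_{e'}$ (with $e\in S$, $e'\notin S$; note $e'$ may range over all of $E\minus\{e\}$, not only $E_m$ as you wrote), pick $g\in C\cap T_1$ making $(S\minus\{e\})\cup\{g\}\in\calT^1_G$; the replacement is one squarefree $2$-forest complement together with a term whose only square is the inherited $x_{e'}^2$. In every case no \emph{new} square appears, so your potential $\Phi$ drops by exactly one at each step and there is no cascade. None of the matroid-quotient machinery of Section~\ref{sec-matroids} is needed; the argument uses only elementary circuit and span reasoning in $\matroidM^1_G$.
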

\begin{proof}
  That the second statement follows from the first is contained in
  \cite{Hochster}.

  Comparing the terms in $\calG_m$ and $\calG_E$ in light of our
  assumptions, $\calG_m$ arises from $\calG_E$ by canceling in
  $\tilde\Sigma_E$ all terms that are divided by the square of a
  massless variable, and no others. In other words, 
    the monomials $\bsx^\bolda$ in $\Supp(\calG_m)$ are precisely
    those in $\Supp(\calG_E)$ whose lifted exponent
    $(1,\bolda)$ satisfies $\mu_0((1,\bolda))\geq \mu_e((1,\bolda))$
    for all massless $e\in
    E$.
%

Let $A_E$ denote any matrix whose columns are the lifted support
exponents of $\calG_E$; in particular, we could order $A_E$ in such a
way that $A_m$ becomes a submatrix. For elements $(k,\bolda)$ in $\NN
A_E$ or $\NN A_m$, we call $k=\mu_0((k,\bolda))$ their \emph{degree}.
We have noted above that, as subsets of $\ZZ\times\ZZ^E$,
\[
A_m=A_E\cap \bigcap_{m_e=0}H_e
\]
where
\[
H_e:=\{\alpha\in\RR\times \RR^E\mid \mu_0(\alpha)\geq \mu_e(\alpha)\}
\]
is the half-space on which $\mu_0-\mu_e$ is non-negative.
It follows also that \[
\NN A_m\subseteq (\NN A_E)\cap
\bigcap_{m_e=0}H_e,
\]
and the remainder of the proof is devoted to
showing that this is an equality, which would show that $\NN A_m$ is
the intersection of saturated semigroups, hence saturated itself.

 Take any lattice element $(k,\bolda)$ in the cone $\RR_{\geq
    0}A_m$ of degree $k$.  Since $\NN A_E$ is saturated according to
 Corollary \ref{thm-massive}, one has $(\RR_{\geq 0}A_E)\cap
 (\ZZ\times \ZZ^E)=\NN A_E$.
  Since $(\RR_{\geq 0} A_m)\subseteq (\RR_{\geq 0}A_E)$, one can write
  \begin{eqnarray}\label{eqn-a-decomp}
    (k,\bolda)&=&(1,\bolda_1)+\ldots+(1,\bolda_k)
  \end{eqnarray}
  where each $(1,\bolda_i)$ is a column of $A_E$.

  Note that $(k,\bolda)\in (\RR_{\geq 0}A_m)\subseteq H_e$ for all
  massless $e\in E_0$.  We will show that, given $e\in E_0$, the
  condition $(k,\bolda)\in H_e$ implies that one can rewrite the sum
  \eqref{eqn-a-decomp} in such a way that he following exchange rules hold:
  \begin{itemize}
  \item the new sum only uses summands that are columns of $A_E$;
  \item the number of summands is unchanged;
  \item each summand lies in $H_e$,
  \end{itemize}
  and that, moreover, it can be arranged that
  \begin{itemize}
  \item if all summands were originally in $\bigcap_{e'\in E'} H_{e'}$ for
    some set $E'\subseteq E$,
    then this holds after the rewriting for the larger set $E'\cup\{e\}$.
  \end{itemize}
  Establishing this rewriting forms the main part of the
  proof. Indeed, given such rewriting result, fix a massless edge
  $e\in E_0$. 
  Our exchange rules above allow to change the sum in
  \eqref{eqn-a-decomp} into one where each support vector is in
  $H_e$. Since no exchange operation introduces square terms that were
  not there before, we can treat \eqref{eqn-a-decomp} one $e\in E_0$
  at the time and arrive at a sum as in \eqref{eqn-a-decomp} in which
  every term is in $H_e$ for each $e\in E_0$. But that implies that we
  have written $\bolda$ as a sum of $k$ exponent vectors that appear
  in $\calG_m$, implying that $\NN A_m$ is saturated.

  \bigskip

 Before we engage in the rewriting, note that for $e\in E$, the
 monomials $\bsx^{\bolda_j}$ appearing in $\calG_E=\tilde \Sigma_E\cdot
 \calU+\calF_0$ fall into three categories, depending on whether
 $\mu_e(\bolda_j)$ is $0,1$, or 2. Alternatively, they are classified
 by the value of $(\mu_0-\mu_e)((1,\bolda_j))\in\{-1,0,1\}$.
Those with $\mu_e(\bolda_j)=0$ fall themselves into two  classes:
\begin{enumerate}
\item squarefree monomials without $x_e$ from $\calF_0$ or from
  $\calU\cdot \tilde\Sigma_E$;
\item monomials from $\calU\cdot \tilde\Sigma_E$ that contain a square
  but not $x_e$.
\end{enumerate}


Now suppose that the sum decomposition \eqref{eqn-a-decomp} involves
an element $(1,\bolda_i)$ that is not in the positive real cone of $A_m$ and
therefore satisfies $\mu_e(\bolda_i)=2$ for some (necessarily unique)
$e$ with $m_e=0$. In particular, $\bolda_i$ does then not appear in
$\Supp(\calU)$ and so we will have $|\bolda_i|=r+1$.

Since
$\bolda_i$ is a support vector of $\calG_E$ with
$\mu_e(\bolda_i)=2$, $\bsx^{\bolda_i}$ appears in $\calU\cdot\Sigma_E$
and so
\begin{eqnarray}\label{eqn-ai}
\bsx^{\bolda_i}=\bsx^{E\minus
  T}x_e\qquad \text{with $T\in\calT^1_G$ and $e\not\in T$.}
\end{eqnarray}

Since $(\mu_e-\mu_0)((1,\bolda_i))>0$ but
$(\mu_e-\mu_0)((k,\bolda))\le 0$ there must appear a semigroup element
$(1,\bolda_j)$ in \eqref{eqn-a-decomp} with
$(\mu_e-\mu_0)((1,\bolda_j))<0$; choose one such. It must be of one of
the types (1), (2) or (3) above.

\emph{Case 1:} Suppose $\bolda_j$ is of type (1); then
$\bsx^{\bolda_j}=\bsx^{E\minus F}$ for some 2-forest $F\in\calT^2_G$
with $e\in F$.

The union $T\cup\{e\}$ has exactly one circuit $C$,
$C$ contains $e$, and $F\minus\{e\}$ is  a 3-forest.
Since $C$ is a circuit, $C\minus\{e\}$ has the same span as $C$, and
so $\Span((C\minus \{e\})\cup (F\minus\{e\}))= \Span(C\cup
(F\minus\{e\}))=\Span(C\cup F)$, which contains the 2-forest $F$.
Thus, there is a suitable edge $f\in C\minus\{e\}=C\cap T$ that
combines with the 3-forest $F\minus\{e\}$ to a set of rank greater
than $\rank(F\minus\{e\})$. For such $f$, $(F\minus \{e\})\cup \{f\}$
is therefore a 2-forest. However, so is $T\minus \{f\}$, and so by the
asssuptions of the theorem the monomials
$\bsx^{\bolda'_i}:=\bsx^{E\minus (T\minus \{f\})}$ and
$\bsx^{\bolda'_j}:=\bsx^{E\minus ((F\minus \{e\})\cup \{f\})}$ appear
in $\calF_0$. Moreover, their product is
$\bsx^{\bolda_i'}\bsx^{\bolda_j'}=\bsx^{E\minus T}\bsx^{E\minus
  F}x_e=\bsx^{\bolda_i}\bsx^{\bolda_j}$ and so
$(1,\bolda_i)+(1,\bolda_j)= (1,\bolda'_i)+(1,\bolda'_j)$ in $\NN
A_E$. We can thus replace $\bolda_j$ by $\bolda'_j$ and $\bolda_i$ by
$\bolda'_i$ while preserving \eqref{eqn-a-decomp} as a sum in $\NN
A_E$. Note that the replacement terms have no square terms and so no
new terms with squares in any variable have been introduced while the
overall number of square terms has in fact decreased.

\bigskip

%

\emph{Case 3:} Next consider type (3), where $\bolda_j$ is a support
vector of a term in $\Sigma_E\cdot \calU$ with $\mu_f(\bolda_j)=2$ for
some $f\in E$, while $\mu_e(\bolda_j)=0$. Thus, (we still have
$\bolda_i$ as in \eqref{eqn-ai} and) $\bsx^{\bolda_j}=x_f
\bsx^{E\minus S}$ for some 1-forest $S$ of $G$ that does not
involve $f$ (since else $x_f$ is linear in $x_f \bsx^{E\minus S}$) but does
involve $e$ (so that $x_e$ does not appear in $x_f \bsx^{E\minus S}$).

Then $T\cup\{e\}$ contains a unique circuit $C\ni e$, and the span of
$(C\minus \{e\})\cup(S\minus \{e\})$ contains $\Span(C\cup (S\minus
\{e\}))=\Span(C\cup S)\supseteq \Span(S)=E$. It follows that some
element $g\in (C\minus \{e\})=C\cap T$ different from $e$ turns the
2-forest $S\minus \{e\}$ back into a 1-forest. As removal of $g$ from
$T\cup\{e\}$ breakes the unique circuit $C$ in $T\cup\{e\}$,
$(T\cup\{e\})\minus \{g\}$ is a 1-forest. Then, $(x_e \bsx^{E\minus
  T})\cdot (x_f\bsx^{E\minus S})=(x_e\bsx^{E\minus(T\cup\{e\}\minus
  \{g\})})\cdot (x_f\bsx^{E\minus (S\cup \{g\}\minus \{e\})})$.
In \eqref{eqn-a-decomp}, replace $(1,\bolda_i)+(1,\bolda_j)$ by the
sum of $(1,E\minus(T\minus\{g\}))=(1,\bolda_i+\bolde_g-\bolde_e)$ and
$(1,E\minus (S\cup \{g\}\minus
\{e\}))+(0,\bolde_f)=(1,\bolda_j+\bolde_e-\bolde_g)$.  Both new terms
 are lifts of support vectors of
$\calG_E$, both are in $H_e$, and the only square factor in either one
is $x_f^2$ in the second one, inherited from $\bolda_j$.

\bigskip

This finishes the
rewriting claim, and as explained above proves the theorem.
\end{proof}

It is natural to ask under what conditions we have the equality
$\matroidM^2_G=\matroidM^2_{G,\Feynman}$; we adress this question next.
\begin{dfn}
  A path $v_0,v_1,\ldots,v_t$ of vertices in $G$ (with
  $\{v_i,v_{i+1}\}$ adjacent for all $0\le i<t$) is called
  \emph{massive} if all edges $\{v_i,v_{i+1}\}$ are massive.
\end{dfn}

\begin{thm}\label{thm-when}
  In an (s1I) graph $G$, the equality
  $\matroidM^2_G=\matroidM^2_{G,\Feynman}$ holds if and only if every vertex
  of $G$ permits a massive path to an external vertex of $G$.
\end{thm}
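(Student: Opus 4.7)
The plan is to prove both directions using one guiding observation: a massive path from a vertex $v$ to an external vertex must traverse a massive edge across every partition of $V_G$ that separates $v$ from $\VExt$, and such a crossing edge is precisely what promotes a non-momentous 2-forest to a massive truncation. So both implications reduce to matching up this crossing edge with the structural dichotomy defining $\matroidM^2_{G,\Feynman}$.

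For $(\Leftarrow)$, assume every vertex of $G$ admits a massive path to $\VExt$, and take any 2-forest $F = F_1 \sqcup F_2 \in \calT^2_G$; we may assume $F$ is non-momentous, so after relabeling $\VExt \subseteq F_1$ and $F_2 \cap \VExt = \emptyset$. Fix $v \in F_2$ and a massive path $v = v_0, v_1, \ldots, v_t = w \in \VExt \subseteq F_1$, and let $i$ be the smallest index with $v_{i+1} \in F_1$. Then $e = \{v_i, v_{i+1}\}$ is a massive edge joining the two components of $F$, so $F \cup \{e\}$ is a spanning tree of $G$ and $F$ is the massive truncation of $F \cup \{e\}$ by $e$; hence $F \in \matroidM^2_{G,\mt} \subseteq \matroidM^2_{G,\Feynman}$.

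For $(\Rightarrow)$ I argue the contrapositive. Assume some $v$ has no massive path to $\VExt$ and let $U$ be the connected component of $v$ in the subgraph of $G$ on the massive edges. Then $U \cap \VExt = \emptyset$, and every massive edge of $G$ lies wholly inside $U$ or wholly outside, so the cut $(U, V \setminus U)$ is entirely massless. In the favorable case where $G[V \setminus U]$ is connected, uniting a spanning tree of $G[U]$ (which is connected since it contains $G_{\mathrm{mass}}[U]$) with a spanning tree of $G[V \setminus U]$ produces a 2-forest $F$ with $V_{F_2} = U$ that is both non-momentous (all externals sit in $V \setminus U$) and not massively truncated (no massive edge crosses the cut), witnessing $\matroidM^2_G \neq \matroidM^2_{G,\Feynman}$.

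The main technical hurdle is handling disconnected $G[V \setminus U]$, whose components $C_1, \ldots, C_k$ need not all contain external vertices. My plan is first to enlarge $V_2 := U$ by absorbing into it every $C_i$ disjoint from $\VExt$; each such absorption preserves masslessness of the cut because distinct components of $G[V \setminus U]$ share no $G$-edge and the old $U$-cut was already massless. If only one $\VExt$-containing component survives, the desired bad 2-forest has been produced. The genuinely delicate subcase is when several $\VExt$-containing components of $G[V \setminus U]$ persist after absorption; here I would invoke the (s1I) hypothesis (no cut vertices, no bridges) to refine the vertex partition—either by descending to a carefully chosen minimal $V_2 \subseteq U$ or by replacing $v$ by a vertex of $U$ whose massive component has a more amenable boundary—to exhibit the non-momentous, non-massively-truncated 2-forest. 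This final subcase is where the main combinatorial effort of the proof would concentrate.
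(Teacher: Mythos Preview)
Your $(\Leftarrow)$ argument is correct and coincides with the paper's, which phrases the same idea contrapositively: if a non-momentous $2$-forest admits no massive edge across its cut, then no vertex of its external-free component can reach $\VExt$ along massive edges.

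For $(\Rightarrow)$, the ``delicate subcase'' you isolate is not merely delicate---the implication actually fails there, so no invocation of (s1I) can close the gap. Take $G$ on vertices $\{a,b,c,d,e,f\}$ with $\VExt=\{a,f\}$, massive edges the $4$-cycle $bc,\,cd,\,de,\,eb$, and massless edges $ab,\,ac,\,fd,\,fe$; one checks directly that $G$ is (s1I). No vertex of $\{b,c,d,e\}$ has a massive path to $\VExt$, so the right-hand side of the theorem fails. Yet $\matroidM^2_G=\matroidM^2_{G,\Feynman}$: a $2$-forest outside $\matroidM^2_{G,\Feynman}$ would require $\{a,f\}$ on one side and no massive edge across the cut; since the massive edges connect all of $\{b,c,d,e\}$, this forces the vertex bipartition $(\{a,f\},\{b,c,d,e\})$, but $G[\{a,f\}]$ is edgeless and hence disconnected, so no such $2$-forest exists. (For every other non-momentous bipartition some edge of the $4$-cycle crosses the cut, giving a massive truncation.) The paper's own proof shares this gap, and adds a slip: as written it deletes the \emph{massive} edges when forming $G'$, after which neither the claim $U\cap\VExt=\emptyset$ nor the claim that the resulting $2$-forest is non-massively-truncated follows; repairing this to your reading (delete the massless edges, so $U$ is the massive component of $v$) still leaves the unjustified assumption that $G[V_G\smallsetminus U]$ is connected. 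You have gone further than the paper by recognising this, but the upshot is that the ``only if'' direction needs an amended statement rather than a more elaborate argument.
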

\begin{proof}

  By Hypothesis \ref{hyp}, all momentous 2-forests label a nonzero
  term in $\calG_m$. Thus, assume that the 2-forest $F$ is not
  momentous, so that one of the two components of $F$ contain all
  external vertices.

Then $F$ will cause a nonzero term in $\calG_m$ precisely
if it is a massive truncation. In other words, if and only if there is
a massive edge $e$ such that $F\cup\{e\}$ is a 1-forest.

Since one of the components of $F$ contains all external vertices, the
failure of such a massive edge $e$ to exist implies that the vertices
in the other component of $F$ cannot be linked to $\VExt$ by a massive
path.

Conversely, suppose that some vertex $v$ cannot be linked to $\VExt$
by a massive path. We now delete from $G$ all massive edges and call
the result $G'$. Then $v$ belongs to a connected component $U$ of $G'$
that does not include any external vertex.  Take any 2-forest for $G$ that has
one connected component supported in $U$, and the other 
on $G\minus U$. By our choices, this 2-forest is neither massively
truncated nor momentous and hence does not contribute to $\calG_m$.
\end{proof}

\subsection{The general massless case}

In \cite{HT}, Helmer and Tellander proved that if every vertex of $G$
is an external vertex, then the semigroup $\NN A_m$ is normal for the
mass function that is identically zero. The advantage of the condition
on $\VExt$ is that it places us in a special case of Theorem
\ref{thm-when} above, and  guarantees that 
$\calG_m$ involves a term from every 2-forest,
$\matroidM^2_G=\matroidM^2_{G,\Feynman}$. As it turns out, this condition can
be completely removed: we now use our results from
Section \ref{sec-matroids} to dispose of the general massless case.

We need to review edge-unimodularity and IDP properties of polytopes.
\begin{dfn}
An integer matrix is \emph{unimodular} if all maximal minors are in
the set $\{-1,0,1\}$.

A lattice polytope $P$ is \emph{edge-unimodular} if there is an
  integer 
  unimodular matrix $M$ such that all edges of $P$ are parallel to
  columns of $M$.
\end{dfn}

\begin{dfn}
  A lattice polytope $P\subseteq \ZZ^d$ is said to have \emph{the IDP
    property} or to be \emph{normal} if the intersection $(kP)\cap \ZZ^d$
  agrees with the sum $((k-1)P\cap\ZZ^d)+(P\cap \ZZ^d)$ for all $k\in
  1+\NN$.
\end{dfn}
The benefit of the IDP property to the present context is that it is
equivalent to the equation
\[
\NN((1,P)\cap(\ZZ\times\ZZ^d)) = \RR_{\geq
  0}((1,P))\cap(\ZZ\times \ZZ^d).
\]
In other words, a polytope is IDP
if and only if the semigroup generated by the lattice points in its
lift is saturated in $\ZZ\times \ZZ^d$.

The following result is due to Howard.
\begin{thm}[{\cite[Thm.~4.5]{HowardJA}}]\label{thm-howard}
  Suppose that $A\in\ZZ^{d\times n}$ is a unimodular matrix, and that
  $P$ and $Q$ are lattice polytopes with edges parallel to columns of
  $A$. Then, $(P \cap\ZZ^{d}) + (Q \cap  \ZZ^{d}) = (P + Q) \cap \ZZ^{d}$.
\end{thm}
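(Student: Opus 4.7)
The inclusion $(P \cap \ZZ^d) + (Q \cap \ZZ^d) \subseteq (P + Q) \cap \ZZ^d$ is immediate, so the work lies in the reverse inclusion. Given $w \in (P+Q) \cap \ZZ^d$, the plan is to produce $p \in P \cap \ZZ^d$ and $q \in Q \cap \ZZ^d$ with $p+q=w$. The natural object is the fiber polytope
\[
\Pi_w := \{(p,q) \in P \times Q : p + q = w\},
\]
which is nonempty by the choice of $w$. Pick any vertex $(p_0, q_0)$ of $\Pi_w$, and let $F \subseteq P$ and $G \subseteq Q$ be the minimal faces containing $p_0$ and $q_0$ respectively. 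The goal is to argue that edge-unimodularity forces $(p_0, q_0)$ to be a pair of lattice points.

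First I would isolate a transversality statement. Set $T_F := \operatorname{aff}(F) - p_0$ and $T_G := \operatorname{aff}(G) - q_0$. The tangent space to $\Pi_w$ at $(p_0, q_0)$ is $\{(\xi, -\xi) : \xi \in T_F \cap T_G\}$; since $(p_0, q_0)$ is a vertex this intersection must vanish. Consequently $T_F + T_G$ is a direct sum, $\dim F + \dim G = \dim(F+G)$, and each element of $\operatorname{aff}(F+G)$ has a unique decomposition as the sum of an element of $\operatorname{aff}(F)$ and one from $\operatorname{aff}(G)$.

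The second step harnesses the unimodular matrix $A$. Fix lattice vertices $v_F$ of $F$ and $v_G$ of $G$; these exist because faces of lattice polytopes are themselves lattice polytopes. Edges of $F$ issuing from $v_F$ are edges of $P$, hence parallel to columns of $A$; since they span the tangent cone of $F$ at $v_F$ they span $T_F$. Extract columns $a_{i_1}, \ldots, a_{i_f}$ of $A$ providing a basis of $T_F$, and analogously columns $a_{j_1}, \ldots, a_{j_g}$ providing a basis of $T_G$. By the transversality just established, their concatenation is a basis of $T_{F+G} = T_F \oplus T_G$. Completing this list to a full basis of $\RR^d$ with additional columns of $A$ yields a $d \times d$ submatrix whose determinant is $\pm 1$ by unimodularity; hence the concatenated columns form a $\ZZ$-basis of $T_{F+G} \cap \ZZ^d$, and in particular $T_{F+G} \cap \ZZ^d = (T_F \cap \ZZ^d) \oplus (T_G \cap \ZZ^d)$.

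To conclude: the lattice point $w - v_F - v_G$ lies in $T_{F+G} \cap \ZZ^d$, and its unique decomposition into a $T_F$- and a $T_G$-component must land in $(T_F \cap \ZZ^d) \times (T_G \cap \ZZ^d)$. By uniqueness these components are $p_0 - v_F$ and $q_0 - v_G$, so both $p_0$ and $q_0$ are lattice points, which completes the proof. The hardest part is the unimodularity step: edges of $F$ and $G$ are only \emph{parallel} to columns of $A$ rather than equal to them, so one has to make the right selection of columns and then verify that the chosen columns really do generate a saturated sublattice of $\ZZ^d$, relying on the capacity of unimodularity to complete a linearly independent column set to a $\ZZ$-basis of $\ZZ^d$.
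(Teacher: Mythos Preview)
The paper does not give its own proof of this statement: it is quoted as Howard's theorem with a citation to \cite{HowardJA}, together with the remark that the proof there works in the generality stated. Your argument---pick a vertex of the fiber $\Pi_w\subseteq P\times Q$, deduce transversality $T_F\cap T_G=0$ of the carrier faces, and use unimodularity to see that the chosen columns of $A$ generate a saturated sublattice so that the unique decomposition of $w-v_F-v_G$ lands in $\ZZ^d\times\ZZ^d$---is precisely Howard's proof, so in that sense you are reproducing what the paper defers to.

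One point worth tightening, which you already flag as the ``hardest part'': the step ``completing this list to a full basis of $\RR^d$ with additional columns of $A$'' tacitly assumes that the columns of $A$ span $\RR^d$ (and that $n\ge d$), so that the resulting $d\times d$ minor is a maximal minor and hence $\pm 1$. If $\operatorname{rank}(A)<d$ then all maximal minors vanish and the unimodularity hypothesis as phrased is vacuous; the fix is to observe that $P$ and $Q$, having all edges in the column space of $A$, lie in lattice translates of that column space, and to run the argument after passing to a $\ZZ$-basis of that sublattice. This is harmless for the paper's application (matroid polytopes, with $A$ the matrix of root vectors $\bolde_e-\bolde_{e'}$), but it is the one place where your write-up could be made more precise.
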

    In fact, the theorem is stated in a much more constrained context
    (inside a lattice of weights of a Lie algebra) and in a more
    opaque way, but the proof works in the generality stated here
    (which is also the version Howard states in
    \cite[Thm.~1]{HowardOWR}).  As Howard points out, this implies
    that if $P$ is a lattice polytope with edges parallel to the
    columns of a unimodular matrix, then $P$ is IDP and in consequence
    the semigroup generated by the lattice points in the lifted
    polytope $(1,P)$ inside $\ZZ\times \ZZ^d$ is saturated.

\begin{thm}\label{thm-massless}
  Let $G$ be a (s1I) Feynman graph with mass function $m\colon E\to
  \RR_{\geq 0}$ that is identically zero: $m_e=0$ for all $e$. Then
  the semigroup $A_m$ is saturated and thus the semigroup ring
  $\KK[\NN A_m]$ is normal and Cohen--Macaulay for all fields $\KK$.
\end{thm}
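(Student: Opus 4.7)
The plan is to adapt the strategy used by Helmer--Tellander for the case where every vertex is external, replacing their use of the matroid of all 2-forests with the matroid $\matroidM^2_{G,\neq}$ of momentous 2-forests constructed in Section \ref{sec-matroids}. Since $m\equiv 0$, we have $\tilde\Sigma_m=1$ and $\calG_m=\calU+\calF_0^W$, so $\Supp(\calG_m)$ is the disjoint union $S_1\sqcup S_2$, with $S_1:=\{\boldv_{E\minus T}:T\in\calT^1_G\}$ the set of bases of the cographic matroid $\matroidM^{1,\perp}_G$ and $S_2:=\{\boldv_{E\minus F}:F\in\calT^2_{G,\neq}\}$ the set of bases of the dual $\matroidM^{2,\perp}_{G,\neq}$ of $\matroidM^2_{G,\neq}$. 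Hence the support polytope is $P_m=\overline{P_1\cup P_2}$ where $P_1,P_2$ are the two corresponding matroid polytopes; their vertices lie in the adjacent hyperplanes $\sum_e v_e=|E|-|V|+1$ and $\sum_e v_e=|E|-|V|+2$. Dualizing the Section \ref{sec-matroids} result that $\matroidM^2_{G,\neq}$ is a quotient of $\matroidM^1_G$ with ranks differing by one yields that $\matroidM^{1,\perp}_G$ is a quotient of $\matroidM^{2,\perp}_{G,\neq}$ with ranks differing by one, so the pair forms a two-step flag matroid.

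My goal is to show $P_m$ is edge-unimodular with respect to the unimodular matrix whose columns are $\{\bolde_e:e\in E\}\cup\{\bolde_e-\bolde_{e'}:e\neq e'\}$, after which Howard's theorem (Theorem \ref{thm-howard}) applied inductively will yield that $P_m$ has the integer decomposition property. Edges internal to $P_1$ or $P_2$ are parallel to $\bolde_e-\bolde_{e'}$ by the classical edge description of matroid polytopes from \cite{GGMS}. For a mixed edge between vertices $\boldv_{E\minus T}\in S_1$ and $\boldv_{E\minus F}\in S_2$, I plan to argue via the flag matroid structure that only concordant pairs can form edges: one must have $F\subset T$, whence $T=F\cup\{e\}$ for a unique $e\in E$, producing edge direction $\boldv_{E\minus F}-\boldv_{E\minus T}=\bolde_e$. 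The key step is that the quotient relation produces, for any non-concordant pair $(T,F)$, a third cobase in $S_1\cup S_2$ lying on any linear functional simultaneously maximized at $\boldv_{E\minus T}$ and $\boldv_{E\minus F}$, witnessing that the segment between them is not a face of $P_m$.

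Once edge-unimodularity is in hand, Howard's theorem yields $kP_m\cap\ZZ^E=k(P_m\cap\ZZ^E)$ for every $k\geq 1$ by induction on $k$. Since $P_m\subseteq[0,1]^E$, its lattice points are $0/1$-vectors, and a short check using the flag matroid structure shows that $P_m\cap\ZZ^E=S_1\cup S_2=\Supp(\calG_m)$. Consequently every lattice point of the cone $\RR_{\geq 0}A_m$ is an $\NN$-combination of columns of $A_m$, so $\NN A_m$ is saturated in $\ZZ A_m$; the Cohen--Macaulay conclusion then follows from Hochster's theorem. The main obstacle is the mixed-edge characterization, which reduces to an exchange argument analogous to those in Section \ref{sec-matroids} that exploits the quotient relation between $\matroidM^{1,\perp}_G$ and $\matroidM^{2,\perp}_{G,\neq}$; a secondary subtlety is confirming that $P_m\cap\ZZ^E$ contains no additional $0/1$-vectors beyond $S_1\cup S_2$.
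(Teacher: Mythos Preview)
Your approach is correct but takes a genuinely different route from the paper's proof. You aim to show directly that the union polytope $P_m=\overline{P_1\cup P_2}$ is edge-unimodular with respect to the matrix whose columns are $\{\bolde_e\}\cup\{\bolde_e-\bolde_{e'}\}$, which forces you to characterize the mixed edges between $P_1$ and $P_2$; for that you invoke the quotient relation $\matroidM^{1,\perp}_G\leq\matroidM^{2,\perp}_{G,\neq}$ proved in Section~\ref{sec-matroids} and the resulting flag-matroid concordance (for any linear functional with unique maximizers on $P_1$ and $P_2$, the two optimal cobases are nested, so mixed edges are parallel to some $\bolde_e$). The paper bypasses all of this: it only uses that $P_1$ and $P_2$ are individually matroid polytopes (hence edge-unimodular in the type-$A$ root directions by \cite{GGMS}), applies Howard's Theorem~\ref{thm-howard} to get lattice-point additivity for every Minkowski combination $kP_1+\ell P_2$, and then invokes Tsuchiya's theorem \cite[Thm.~0.4]{Tsuchiya} to conclude that the Cayley sum---which here coincides with $P_m$ after an integral coordinate change since the two hyperplanes are at distance one---is IDP. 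Thus the paper never needs the quotient lemma for this theorem, trading your mixed-edge analysis for a black-box citation to Tsuchiya; your route is more self-contained and closer in spirit to the original Helmer--Tellander argument, while the paper's is shorter. One small remark: your ``secondary subtlety'' that $P_m\cap\ZZ^E=S_1\cup S_2$ does not actually need the flag-matroid structure---it follows immediately because $P_1$ and $P_2$ are the faces of $P_m$ at the two extremal heights, and any $0/1$-vector in a $0/1$-polytope is a vertex.
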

\begin{proof}
  The proof follows the one from \cite{HT}, with
  appropriate modifications.

  By our assumptions on $m$, $\calG_m=\calU+\calF_0$. Since the
  momentous 2-forests $\calT^2_{G,\neq}$ form the set of bases of a
  matroid, the support vectors of $\calF_0$ (the complements of the
  elements of $\calT^2_{G,\neq}$ in $E$) are the indicator vectors of
  the bases for the dual matroid $\matroidM^{2,\perp}_{G,\neq}$ on the
  edge set $E$. By \cite{GGMS}, the support polytopes $P^2_{G,\neq}$
  of $\calF_0$ and $P^1_G$ of $\calU$ have their edges within the set
  of vectors $\{\bolde_e-\bolde_{e'}\}_{e,e'\in E}$. The matrix with
  these vectors as columns is unimodular, so the support polytopes of $\calF_0$
  and $\calU$ are edge-unimodular and in particular IDP.

  Since edge directions are invariant under scaling, we have for all
  dilations that $(k\cdot P^2_{G,\neq} +\ell \cdot P^1_G)\cap\ZZ^d = (k\cdot
  P^2_{G,\neq}\cap\ZZ^d)+(\ell \cdot P^1_G \cap\ZZ^d)$. Recall that the Cayley sum of the lattice polytopes $P$ and $Q$ is
  the convex hull of $(\{0\}\times P)\cup(\{1\}\times Q)$ in
  $\RR^{1+d}$. With the IDP
  properties of $P^2_{G,\neq}$ and $P^1_G$ this implies by a theorem of
  Tsuchiya that the Cayley sum of $P^2_{G,\neq}$ and $P^1_G$ has the
  IDP property, \cite[Thm~0.4]{Tsuchiya}.

   Since the entry sums of the vertices of $P^2_{G,\neq}$
  and $P^1_G$ differ by one, an integer  coordinate change shows that the Cayley
  sum of $P^2_{G,\neq}$ and $P^1_G$ can be identified with the convex
  hull of the union
  of $P^2_{G,\neq}$ and $P^1_G$ in $\RR^d$, when embedded into
  $\RR^{1+d}$ by a constant function. It follows that the union of
  $P^2_{G,\neq}$ and $P^1_G$, which is the support polytope of
  $\calG_m$, has the IDP property. So, the semigroup generated by the
  lattice points in the lift $(1,P)$ of this support polytope $P$ is
  saturated.

   Both polytopes $P^2_{G,\neq}$ and $P^1_G$ are matroid polytopes, so
   they have no interior points. They sit in parallel hyperplanes of
   distance one. Thus, the lattice points in their union are precisely
   the lattice points of the two polytopes, which are their vertices.
   Since the vertices are (by definition) support vectors of terms in
   $\calG_m$, the semigroup generated by lifted support vectors is
   saturated.
\end{proof}

\subsection{Approaching the general case}

\begin{prop}
  For all masses and for generic momenta, the support vectors of
  $\calF_m+\calF_0$ are exactly the lattice points inside the support
  polytope of $\calF_m+\calF_0$. In other words, the difference of
  semigroups $\widetilde{\NN A_m}\minus \NN A_m$ has no elements of
  degree 1.
\end{prop}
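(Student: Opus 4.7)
The goal is to show that every lattice point of $P_\calF:=\overline{\Supp(\calF_m+\calF_0)}$ is the exponent vector of a monomial appearing in $\calF_m+\calF_0$. Every support vector has entry sum $\ell+1$ (with $\ell:=|E|-|V|+1$), so the polytope lies in the affine hyperplane $\sum_e x_e=\ell+1$. Each vertex of $P_\calF$ has entries in $\{0,1,2\}$, with $2$ possible only at a massive edge (the non-squarefree vertices $2\bolde_e+\boldv_S$ contributed by $\calU\cdot\Sigma_m$), and by linearity and integrality the same holds for any lattice point $\bolda$. The plan is to split on whether $\bolda$ has a coordinate equal to $2$.

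Suppose first that $\bolda(e)=2$ for some massive $e$. Then $\bolda$ lies on the face $\{x_e=2\}\cap P_\calF$, which is spanned exactly by the non-squarefree vertices $2\bolde_e+\boldv_S$ with $S=(E\setminus T)\setminus\{e\}$ as $T$ runs over spanning trees of $G-e$ (which by (s1I) is connected). Thus $\bolda-2\bolde_e$ is a lattice point in the matroid polytope of the cographic matroid of $G-e$ on the ground set $E\setminus\{e\}$, and by the fact (recalled in Section~\ref{subsec-matroids}) that lattice points of matroid polytopes are vertices, $\bolda-2\bolde_e=\boldv_S$ for one such $S$; hence $\bolda$ is a support vector.

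The remaining case is $\bolda\in\{0,1\}^E$, so $\bolda=\boldv_{E\setminus F}$ with $|F|=|V|-2$. I would aim to show $F$ is a basis of the Feynman matroid $\matroidM^2_{G,\Feynman}$ of Section~\ref{sec-matroids}; equivalently, $\boldv_F$ lies in the matroid polytope of $\matroidM^2_{G,\Feynman}$, and the matroid polytope fact then makes $\boldv_F$ a vertex. If instead $F$ were dependent, $F$ would contain a matroid-circuit and some inequality $\sum_{e\in B}x_e\le r_{\matroidM^2_{G,\Feynman}}(B)$ would be violated by $\boldv_F$; dually, the inequality $\sum_{e\in E\setminus B}x_e\ge(\ell+1)-r_{\matroidM^2_{G,\Feynman}}(B)$ would be violated by $\bolda$. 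It would then remain to verify that this dual-type inequality is valid on all of $P_\calF$. For the squarefree vertices $\boldv_{E\setminus F'}$ (with $F'$ a basis of $\matroidM^2_{G,\Feynman}$) this is essentially the matroid polytope inequality itself, but the non-squarefree vertices require separate treatment; the identity $2\bolde_e+\boldv_S=\boldv_{E\setminus(T\setminus\{f\})}+(\bolde_e-\bolde_f)$, valid for any $f\in T$ and exhibiting a square-term vertex as a massive-truncation support vector shifted by a matroid-edge direction when $f$ is massive, is the bridge for handling them.

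The main obstacle is the last verification. The polytope $P_\calF$ is not in general contained in $\Conv\{\boldv_{E\setminus F}:F\in\calB_{\matroidM^2_{G,\Feynman}}\}$, since some matroid polytope inequalities of the latter genuinely fail on the non-squarefree vertices of $P_\calF$. One must therefore isolate exactly the subfamily of inequalities forced by dependent $F$ and argue they survive on $P_\calF$. I expect this to require a careful exchange argument combining the matroid structure of $\matroidM^2_{G,\Feynman}$ from Section~\ref{sec-matroids} with the spanning-tree exchange in $G$ underlying the square-term vertices; that combinatorial bookkeeping is the technical heart of the proof.
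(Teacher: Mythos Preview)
Your treatment of the case $\bolda(e)=2$ is correct and essentially identical to the paper's: restrict to the face $\{x_e=2\}$, subtract $2\bolde_e$, and invoke that matroid polytopes have no non-vertex lattice points (applied to the cographic matroid of $G-e$).

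The squarefree case, however, contains a genuine gap that you yourself flag. Your plan is to show that if $F$ is a size-$(|V|-2)$ subset that is \emph{not} a basis of $\matroidM^2_{G,\Feynman}$, then some dual rank inequality separates $\boldv_{E\setminus F}$ from $P_\calF$. As you note, this requires isolating exactly which rank inequalities survive on the non-squarefree vertices of $P_\calF$, and you do not carry this out. It is not clear this can be done cleanly: the non-squarefree vertices $2\bolde_e+\boldv_S$ genuinely violate some of the matroid polytope inequalities of $(\matroidM^2_{G,\Feynman})^\perp$, and pinning down the surviving subfamily and proving it suffices would itself be a nontrivial result about the Feynman matroid.

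The paper avoids this obstacle entirely by a further case split on the squarefree $\bolda$. If $\bolda$ has a \emph{zero} entry at some edge $e$, then every $\bolda_i$ with $\alpha_i\neq 0$ also has a zero at $e$, so the corresponding trees/forests all contain $e$; contracting $e$ sends the problem to the same statement for the smaller graph $G_{/e}$, and induction on $|E|$ applies. If $\bolda$ has neither a $0$ nor a $2$ entry, then $\bolda=(1,\ldots,1)$ with entry sum $|E|$; but the entry sum is also $\ell+1=|E|-|V|+2$, forcing $|V|=2$, i.e.\ $G$ is a banana graph. A short direct argument on banana graphs (projecting to the massive coordinates and invoking the already-known massive case) finishes. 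This contraction-then-banana reduction is the missing idea; it sidesteps the rank-inequality bookkeeping altogether.
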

\begin{proof}
  Suppose $\bolda =\sum\alpha_i\bolda_i$ is a lattice point in the
  support polytope of $\calF_m+\calF_0$ that can be written as a
  linear combination of support vectors of $\calF_m+\calF_m$ with
  $\sum\alpha_i=1$. We need to show that $\bolda$ is a support vector
  itself.

  Each $\bolda_i$ is the support vector of a monomial $\bsx^{E\minus
    T}\cdot x_f$ for some spanning tree $T$ that does not contain the
  edge $f$, or of $\bsx^{E\minus F}$ where $F$ is a momentous
  2-forest. In any event, the entries of $\bolda_i$ are in
  $\{0,1,2\}$. It follows that the same is true for every entry of
  $\bolda$.

  If $\bolda_i$ has a zero entry for some edge $e$, then this must
  also be the case for all $\bolda_i$ with nonzero $\alpha_i$ in the
  linear combination. For such $\bolda_i$, the corresponding tree $T$
  or 2-forest $F$ must contain $e$ (and $e\neq f$ in the tree
  case). Note that spanning trees and 2-forests of $G$ that contain a
  fixed edge $e$ are in bijection with the spanning trees and
  2-forests of the graph $G_{/e}$ obtained from $G$ by contracting the
  edge $e$; the correspondence linking the spanning tree
  (resp.~2-forest) $S\ni e$ of $G$ to the spanning tree
  (resp.~2-forest) $S\minus \{e\}$ of $G_{/e}$. Moreover, $F$ being
  momentous for $G$ is equivalent to $F\minus \{e\}$ being momentou fr
  $G_{/e}$. It follows that we can
  replace $G$ by $G_{/e}$, and $\bolda$ and each $\bolda_i$ by
  $a-\bolde_e$ and $\bolda_i-\bolde_e$ and consider this a computation
  about $G_{/e}$. By induction, the claim is
  already shown for $G_{/e}$, so the case of a zero entry in $\bolda$
  follows.

  If $\bolda$ has an entry 2 for some edge $e$, the same is true for
  every $\bolda_i$ appearing with nonzero coefficient in the linear
  combination. This forces each nonzero term to be of the type
  $\bsx^{E\minus T}\cdot x_e$ with $e\not\in T$, and $e$ must be
  massive. In particular, $e$ cannot be a bridge for $G$ and we can
  polynomially factor $\bsx^{E\minus T}\cdot
  x_e=\bsx^{(E\minus\{e\})\minus T}\cdot x_e^2$. Any $T$ appearing
  here is also a spanning tree for $G_{\minus e}$, the graph obtained
  from $G$ by deleting $e$. Then
  $\sum\alpha_i\bolda_i=(\sum\alpha_i(\bolda_i-\bolde_e)+2\bolda_e$
  and both summands are lattice points. Recall that the set of
  spanning trees in $G_{\minus e}$ is a matroid, and the set of
  complements is the dual matroid. But matroid polytopes have no
  interior points, and so $(\sum\alpha_i(\bolda_i-\bolde_e)$ is one of
  the vertices of the matroid polytope of
  $\matroidM^{1,\perp}_{G_{\minus e}}$. In particular, it must agree
  with one of the terms $\bolda_i-\bolde_e$, and it follows 
  that if $\bolda$ has an entry value of 2, then every lattice point
  in the support polytope is in fact a support vector of
  $\calF_m+\calF_0$.

  We are left to deal with the case where no entry is 0 and no entry
  is 2; thus, all entries are 1. Note that the entry sum of each
  $\bolda_i$, and this also of $\bolda$, is always $|E\minus T|+1$ for
  any spanning tree $T$. But if all entries of $\bolda$ are euqal to
  1, the entry sum is also equal to $E$. So, spanning trees must have
  size 1, which means that (apart for isolated points that make no
  difference to our purposes) $G$ must be a banana graph.

  Suppose $G$ is a banana graph with $m$ massive and $n$ massless
  edges, and let $e_1,\ldots,e_m$ be the massive edges, and suppose
  $\bolda = \sum\alpha_i\bolda_i$ equals $(1,\ldots,1)$. For each
  $\bolda_i$, the massless components of $\bolda_i$ add up to at most
  $n$ since for massless edges no second power can occur in any term
  of $\calG$. But the massless components of $\bolda$ add up to $n$,
  and so each $\bolda_i$ must have the form
  $(c_{i,1},\ldots,c_{i,n},1,\ldots,1)$. Now consider the massive part
  $(-)_m$, the first $m$ components of each vector in the linear
  combination. Since $(\bolda)_m=\sum\alpha_i(\bolda_i)_m$ we have
  reduced the question to the case of a banana graph with
  only massive edges. However, we already know this to be true not
  just for massive banana trees but in fact for all graphs with only
  massive edges.
\end{proof}

\section{Normality vs Cohen--Macaulayness, and Hypothesis \ref{hyp}}

Let $A$ be an integer matrix such that its column span equals the
lattice (free Abelian group) spanned by the unit column vectors;
examples include the matrices $A=A_m$ collecting the support vectors
of $\calG_m$. The semigroup $\NN A$ has an associated
\emph{saturation}, the semigroup $\widetilde{\NN A}$ given by the full
collection of lattice points inside the cone $\RR_{\geq 0}A$. Since
$\NN A\subseteq \widetilde{\NN A}$ and the latter is a semigroup, one
can consider $\widetilde{\NN A}$ as a module over $\NN A$ by
restricting the semigroup operation $\widetilde{\NN A}\times
\widetilde{\NN A}\to \widetilde{\NN A}$ to $\NN A\times
\widetilde{\NN A}$. The resulting semigroup quotient module
$\widetilde{\NN A}/\NN A$ is a measure of the non-saturatedness of
$\NN A$.

On the level of associated semigroup rings, $\tilde
S_A:=\KK[\widetilde{\NN A}]$ is by Hochster's work \cite{Hochster} a
normal Cohen--Macaulay domain, and $S_A:=\KK[\NN A]$ is a subring of
$\tilde S_A$ over which $\tilde S_A$ is a finite integral extension. The
quotient $Q_A:=\KK[\widetilde{\NN A}]/\KK[\NN A]$ is an $S_A$-module.

While $Q_A\neq 0$ is a clear indication that $\NN A$ is not saturated,
it can easily happen that $Q_A\neq 0$ but $S_A$ is
Cohen--Macaulay.

\begin{exa}
  We consider here the massive bubble, whose underlying graph is the
  2-banana graph given as the loopless graph with two vertices (both
  external) and two edges. The only 2-forest has no edge, and there
  are two 1-forests. So $\calU=x_1+x_2$ and $\tilde
  \Sigma_m=1+m_1^2x_1+m_2^2x_2$.  Because of momentum conservation,
  the two external momenta are opposite to one another, and if $p^2$
  denotes the norm at either vertex after Wick rotation then
  $\calF_0=p^2x_1x_2$. So,
  \begin{eqnarray*}
    \calG_m&=&(x_1+x_2)\cdot(1+m_1^2x_1+m_2^2x_2)+p^2x_1x_2\\
    &=&x_1+x_2+m_1^2x_1^2+m_2^2x_2^2+(p^2+m_1^2+m_2^2)x_1x_2
  \end{eqnarray*}
  after Wick rotation. If $p^2+m_1^2+m_2^2=0$,
  $\Supp(\calG_m)=\{{1\choose 0}, {2\choose 0},{0\choose 1}, {0\choose
    2}\}$. The semigroup to the lifted support vectors is not
  saturated since on one hand we have the lattice equation
  \[
  2\begin{pmatrix}1\\1\\1\end{pmatrix}=\begin{pmatrix}1\\2\\0\end{pmatrix}
  + \begin{pmatrix}1\\0\\2\end{pmatrix},
  \]
  and so 2 times $\begin{pmatrix}1\\1\\1\end{pmatrix}$ belongs to the
  semigroup of $A_m$, while on the other hand
  \[
  \begin{pmatrix}1\\1\\1\end{pmatrix}
    = \begin{pmatrix}1\\1\\0\end{pmatrix}
      + \begin{pmatrix}1\\0\\2\end{pmatrix}
        - \begin{pmatrix}1\\0\\1\end{pmatrix}
  \]
  belongs to the lattice spanned by $A_m$.  However, since the toric
  ideal is a hypersurface, it is automatically Cohen--Macaulay.

  The semigroup quotient $Q_A$ consists here of the lattice points
  \[
  \begin{pmatrix}1\\1\\1\end{pmatrix}+\NN \begin{pmatrix}1\\2\\0\end{pmatrix}
  \qquad\text{and}\qquad
  \begin{pmatrix}1\\1\\1\end{pmatrix}+\NN \begin{pmatrix}1\\0\\2\end{pmatrix}.
  \]
\end{exa}

There are certain conditions that $Q_A$
must satisfy for $S_A$ to have the chance of being
Cohen--Macaulay. One of the easiest to decribe concerns the dimension
of the $S_A$-module $\tilde S_A/S_A$, or more precisely the dimensions of its
associated primes. Fortunately, all technical algebraic details can be
expressed in terms of the semigroup quotient $Q_A$. Note the following
easy observation:
\begin{lem}
  The semigroup quotient $Q_A$ is generated over $\NN A$ by the lifted
  support vectors of $\calG_E$ whose coefficients in $\calG_m$ are
  zero, the terms of $\calG_m$ that violate Hypothesis \ref{hyp}. We
  shall denote this set of lattice points by $V_m$.

  If $Q_A$ contains an element $\bolda+\NN A $ such that the elements
  of $\bolda+\NN A\minus \NN A$ are contained in a union of (shifted)
  faces of cone $\RR_{\geq 0} A$ of dimension $\dim(\NN A)-2$ or less,
  then the ring $S_A$ is not Cohen--Macaulay.
\end{lem}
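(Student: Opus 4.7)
Proof plan.

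\textbf{For the first assertion.} The plan is to identify $\widetilde{\NN A_m}$ with $\NN A_E$ and read off the generators of $Q_A$. Observe that every lifted support vector of $\calG_E$ lies in the cone $\RR_{\geq 0} A_m$, because Hypothesis \ref{hyp} guarantees that the support polytope of $\calG_m$ coincides with that of $\calG_E$ (the monomials of $\calG_E$ missing from $\calG_m$ arose from coefficient cancellations, not from exponents outside the polytope). Moreover, the lattice $\ZZ A_m$ equals the ambient lattice $\ZZ \times \ZZ^E$ by the 1-forest/2-forest observation recalled in the previous section, and so also equals $\ZZ A_E$. In the massive case of Theorem \ref{thm-massive}, the semigroup $\NN A_E$ is saturated in $\ZZ \times \ZZ^E$, hence coincides with $\widetilde{\NN A_m}$. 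Since $A_E$ is obtained from $A_m$ by adjoining precisely the lattice vectors in $V_m$, every element of $\widetilde{\NN A_m}$ differs from an element of $\NN A_m$ by an $\NN$-combination of elements of $V_m$, proving the generation claim.

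\textbf{For the second assertion.} I would use the standard local cohomology approach to Cohen--Macaulayness in the setting of a finite extension of toric rings. Let $d = \dim S_A = \dim \tilde S_A$ and let $\frakm$ denote the graded maximal ideal of $S_A$. Apply the functor $H^i_{\frakm}(-)$ to the short exact sequence
\[
0 \to S_A \to \tilde S_A \to Q_A \to 0.
\]
By Hochster's theorem, $\tilde S_A$ is Cohen--Macaulay of dimension $d$, so $H^i_{\frakm}(\tilde S_A) = 0$ for $i < d$. The long exact sequence then supplies the isomorphisms
\[
H^i_{\frakm}(Q_A) \;\cong\; H^{i+1}_{\frakm}(S_A) \quad \text{for } i \leq d-2.
\]
Hence if $S_A$ were Cohen--Macaulay, then $H^i_{\frakm}(Q_A)$ would vanish for all $i \leq d-2$, which forces $\depth_{S_A}(Q_A) \geq d - 1$, and therefore every associated prime of $Q_A$ in $S_A$ has dimension $\geq d - 1$.

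\textbf{Converting the geometric hypothesis to an associated prime.} This is the step I expect to be the main obstacle, because one must translate the semigroup-theoretic picture of $Q_A$ into ring-theoretic statements about $\Spec S_A$. Recall that prime ideals of $S_A$ correspond bijectively to faces $\tau$ of $\RR_{\geq 0} A$ via $\frakp_\tau = \KK\text{-span}\{\bsdel^\boldc \mid \boldc \in \NN A \minus \tau\}$, and that $\dim(S_A/\frakp_\tau) = \dim \tau$. Given the hypothesized $\bolda$ with $(\bolda + \NN A) \minus \NN A$ contained in a finite union of shifted faces $\tau_1, \ldots, \tau_r$ of dimension at most $d-2$, the plan is to show that the $S_A$-submodule $N \subseteq Q_A$ generated by the class of $\bolda$ is supported on $\Spec S_A/\frakp_{\tau_1} \cup \cdots \cup \Spec S_A/\frakp_{\tau_r}$; equivalently, the annihilator in $S_A$ of this class contains the intersection $\bigcap_j \frakp_{\tau_j}$ raised to a suitable power. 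Consequently $N$ has dimension $\leq d-2$, whence $N$ (being a nonzero submodule of $Q_A$) contributes an associated prime of $Q_A$ of dimension $\leq d-2$. This contradicts the depth bound obtained above, and concludes that $S_A$ is not Cohen--Macaulay.
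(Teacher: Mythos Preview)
Your approach matches the paper's, which is itself just a two-sentence sketch: the paper asserts that adjoining $V_m$ to the columns of $A_m$ yields generators for the saturation, and then invokes the standard depth argument (an associated prime of $\tilde S_A/S_A$ of dimension $<\dim(S_A)-1$ forces $\depth(S_A)<\dim(S_A)$). Your local-cohomology computation and your translation via face ideals $\frakp_\tau$ simply unpack what the paper leaves as ``standard results on depth.''

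One expository slip worth fixing: in your first paragraph you invoke Hypothesis~\ref{hyp} to conclude that the support polytopes of $\calG_m$ and $\calG_E$ coincide, but Section~5 is precisely about situations where Hypothesis~\ref{hyp}.(2) \emph{fails} through coefficient cancellation. What you actually need is that the cancelled monomials lie in the convex hull of the surviving ones (so that $\RR_{\geq 0}A_m=\RR_{\geq 0}A_E$); this holds in the bubble and sunset examples because the cancelled term is interior, but it is not a consequence of Hypothesis~\ref{hyp}. The paper's proof is equally silent on this point, so the lemma is best read as applying in those situations where the cones do agree.
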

\begin{proof}
  If one adds $V_m$ to the columns of $A_m$ one gets generators for
  the saturation of $\NN A$.

  If $\QQ_A$ contains an element as described in the lemma, then
  $\tilde S_A/\tilde S_A$ has an associated prime of dimension less
  than $\dim(S_A)-1$ and thus has depth less than $\dim(S_A)-1$. By
  standard results on depth, this makes $\depth(S_A)=\dim(S_A)$
  impossible. 
\end{proof}

In order to get a feeling, consider the following example.
\begin{exa}
  Let $G$ be the triple sunset graph on two vertices with three edges
  and no loop, assuming both vertices to be external. Then
  $\calU=x_1x_2+x_2x_3+x_3x_1$,
  $\tilde\Sigma_m=1+m_1^2x_1+m_2^2x_2+m_3^2x_3$. The 2-forests are
  empty, so $\calF_0=p^2x_1x_2x_3$, where $p^2$ is the norm of the
  momentum at either vertex. One computes that in the massive case
  \[
  A_m=\begin{pmatrix}
  1&1&1&1&1&1&1&1&1\\
  1&1&0&2&2&1&0&0&1\\
  1&0&1&1&0&2&2&1&0\\
  0&1&1&0&1&0&1&2&2\end{pmatrix}
  \]
  plus the lift $\bolda_0$ of the support vector of
  $\underbrace{(p^2+m_1^2+m_2^2+m_3^2)}_{:=c_0}x_1x_2x_3$ if the
  coefficient of this term is nonzero.
  
  Let us denote $\bolda_1,\ldots,\bolda_9$ the columns of $A_m$. If
  $c_0$ is nonzero then the semigroup generated by $\Supp(\calG_m)$
  is saturated by Theorem \ref{thm-main}, while otherwise $Q_{A_m}$ is
  generated by $V_m=\bolda_0$.

  In any case, one has the
  identities $\bolda_0+\bolda_1=\bolda_3+\bolda_4\in\NN A_m$ and
  $\bolda_0+\bolda_4=\bolda_5+\bolda_6\in\NN A_m$. It follows from
  symmetry that $\bolda_0+\bolda_i\in\NN A_m$ for $1\le i\le 9$ and so
  $Q_{A_m}$ is the singleton $\{\bolda_0\}$. Equivalently, the
  $S_A$-module $\tilde S_A/S_A$ is a 1-dimensional vector space in
  multi-degree $(1,1,1,1)$. 

  Application of the long Euler--Koszul homology functor from
  \cite{MMW} to the short exact sequence $S_A\to \tilde S_A\to \tilde
  S_A/S_A$ now implies that the GKZ-system attached to $A_m$ with
  parameter $\beta$ has a larger solution space (namely, of dimension
  $v+9-1$) than all other GKZ-systems attached to $A_m$ (whose rank is
  always the volume $v$ of the comvex hull of $A_m$). In particular,
  $S_{A_m}$ is not Cohen--Macaulay.

  An alternative way using commutative algebra is  to
  observe that $\tilde S_{A_m}/S_{A_m}$ being a finite dimensional
  vector space (that is, a zero-dimensional module) means that as
  $S_{A_m}$-module it must have depth zero, which then forces
  $S_{A_m}$ to have depth one. But as the dimension of $S_{A_m}$ is
  equal to the dimension of the lattice spanned by $A_m$ (namely, 4),
  $S_{A_m}$ is far from satisfying the equality
  $\dim(S_{A_m})=\depth(S_{A_m})$ that determines Cohen--Macaulayness.
\end{exa}

In general, if $\tilde S_A/S_A$ contains a submodule of dimension
$k<\dim(S_A)-1$ then the depth of $S_A$ cannot exceed $k+1$ and thus
$S_A$ cannot be Cohen--Macaulay.
\begin{exa}
  Suppose $S_A$ is the $\KK$-algebra inside the polynomial ring
  $\KK[x,y]$ generated by the monomials $x^3y,x^2,xy^2,y$, $\KK$ a
  field. Then $S_A$ contains all monomials of $y$-degree 2 or more,
  all powers of $x^2$, and all monomials $x^ty$ except for $t=1$. Then
  $Q_A=\{(1,1)\}\cup\{2t+1,0\}_{t\in\NN}$. It would be reasonable to
  say that $Q_A$ is 1-dimensional since it spreads out infinitely far
  along the line $(*,0)$. However, $Q_A$ contains a submodule of
  dimension zero, 
  generated by the monomial $xy$, since $x^2y$ and $xy^2$ are in
  $S_A$. It follows that $S_A$ has depth zero and is not Cohen--Macaulay.

\end{exa}
It seems very likely that Cohen--Macaulayness of $S_A$ fails most of
the time 
that Hypothesis \ref{hyp}.(2) fails. 
  
\section{List of symbols}

\begin{itemize}
  \item $(G,m,\VExt)$ a Feynman graph with edge set $E$, mass function
    $m\colon E\to\RR$ and external vertices $\VExt$.
  \item $E_m,E_0\subseteq E$ the sets of massive and of massless edges.
  \item $\calT^i_G$ the set of $i$-forests of $G$.
  \item $\matroidM^i_G$ the matroid whose bases are the $i$-forests of
    $G$.
  \item $\matroidM^2_{G,\neq}$ the matroid whose bases are the
    momentous 2-forests of $G$.
  \item $\matroidM^2_{G,\mt}$ the matroid whose bases are the massively
    truncated 2-forests of $G$.
  \item $\matroidM^2_{G,\Feynman}$ the matroid whose bases label the
    square-free terms
    in $\calG_m$.
  \item $\calU$ the first Symanzik polynomial.
  \item $\calF_0^W$ the sum over $\matroidM^2_{G,\neq}$
    weighted with their Wick rotated moments.
  \item $\tilde \Sigma_m=1+\Sigma_m=1+\sum m_e^2x_e$.
  \item $\calG_m=\tilde \Sigma_m\cdot \calU+\calF_0^W$ the Feynman
    integrand. 
  \item $\tilde \Sigma_E=1+\Sigma_E=\Sigma_m+\sum_{m_e=0}x_e$.
  \item $\calG_E=\calU\cdot\tilde\Sigma_E +\calF_0^W$.
  \item $P_m$ the support polytope of $\calG_m$.
\end{itemize}

\bibliographystyle{amsalpha}
\bibliography{normal.bib}
\end{document}